\newcommand{\M}  [1] {\ensuremath{{\overline{\mathcal M}}{_{0, #1}(\R)}}}   
\newcommand{\CM} [1] {\ensuremath{{\overline{\mathcal M}}{_{0, #1}}}}       
\def\<{\langle}
\def\>{\rangle}
\def\R{\mathbb{R}}
\newcommand{\vel}{\nu}
\newcommand{\cc}{cc}
\newcommand{\G}{\mathcal G}
\newcommand{\str}{\mathcal S}
\newcommand{\treemap} {{\Phi}}
\newcommand{\hide}[1]{}
\theoremstyle{plain}
\newtheorem{thm}{Theorem}
\newtheorem{prop}[thm]{Proposition}
\newtheorem{lem}[thm]{Lemma}
\newtheorem*{race}{Racing Lemma}
\newtheorem*{ang}{Angle Bisector Problem}
\newtheorem*{open}{Open Problem}
\theoremstyle{definition}
\newtheorem*{defn}{Definition}
\newtheorem*{note}{Notation}
\theoremstyle{remark}
\newtheorem*{rem}{Remark}
\newtheorem*{ack}{Acknowledgments}
\numberwithin{equation}{section}
\begin{document}

\title{Skeletal Rigidity of Phylogenetic Trees}

\author{Howard Cheng}
\address{H.\ Cheng: University of Arizona, Tucson, AZ 85721}
\email{howardc@email.arizona.edu}

\author{Satyan L.\ Devadoss}
\address{S.\ Devadoss: Williams College, Williamstown, MA 01267}
\email{satyan.devadoss@williams.edu}

\author{Brian Li}
\address{B.\ Li: Williams College, Williamstown, MA 01267}
\email{brian.t.li@williams.edu}

\author{Andrej Risteski}
\address{A.\ Risteski: Princeton University, Princeton, NJ 08544}
\email{risteski@princeton.edu}

\begin{abstract}
Motivated by geometric origami and the straight skeleton construction, we outline a map between spaces of phylogenetic trees and spaces of planar polygons.  The limitations of this map is studied through explicit examples, culminating in proving a structural rigidity result.
\end{abstract}

\subjclass[2000]{05C05, 52C25, 92B10}

\keywords{phylogenetics, straight skeleton, rigidity}

\maketitle

\baselineskip=17pt

%
%
\section{Motivation} 
\label{s:intro}

There has been tremendous interest recently in mathematical biology, namely in the field of phylogenetics.  The work by Boardman \cite{boa} in the 1970s on the language of trees from the homotopy viewpoint has kindled numerous structures of tree spaces.  The most notably could be that of Billera, Holmes, and Vogtmann \cite{bhv} on a space of metric trees.  Another construction involving planar trees is given in \cite{dm}, where a close relationship (partly using origami foldings) is given to \M{n}, the real points of moduli spaces of stable genus zero algebraic curves marked with families of distinct smooth points.  One can understand them as spaces of rooted metric trees with labeled leaves, which resolve the singularities studied in \cite{bhv} from the phylogenetic point of view.

As there exists spaces of planar metric trees, there are space of planar polygons:  Given a collection of positive real numbers $r = (r_1, \ldots, r_n)$, consider the moduli space of polygons in the plane with consecutive side lengths as given by $r$; this space can be viewed as equivalence classes of planar linkages.  There exists a complex-analytic structure on this space defined by Deligne-Mostow weighted quotients \cite{km}.  By considering the \emph{stable polygons} of this space, it quite unexpectedly becomes isomorphic to a certain geometric invariant theoretic quotient of the $n$-point configuration space on the  projective line \cite{hu}.  

Our goal, from an elementary level, is to construct and analyze a natural map between spaces of polygons and planar metric trees.  Given a simple planar polygon $P$, there exists a natural metric tree  $\str(P)$ associated to $P$ called its \emph{straight skeleton}.  It was introduced to computational geometry by Aichholzer et al.\ \cite{ah}, and used for automated designs of roofs and origami folding problems.  We consider the inverse problem:  Given a planar metric tree, construct a polygon whose straight skeleton is the tree.  

Section~\ref{s:pre} provides some preliminary definitions and observations, and a topological framework is provided in Section~\ref{s:top}.  The notion of velocity in capturing the skeleton of a polygon is introduced in Section~\ref{s:velocity}, and the main rigidity theorem is given in Section~\ref{s:rigid}:  For a phylogenetic tree $T$ with $n$ leaves, there exist at most $2n-5$ configurations of $T$ which appear as straight skeletons of convex polygons.  Section~\ref{s:race} contains the lemma which does the heavy lifting, which is analogous to the Cauchy arm lemma, used in the rigidity of convex polyhedra \cite[Chapter 6]{do}.  Finally, section~\ref{s:comp} closes with computational issues related to constructing the polygon given a tree, which also uncovers ties to a much older angle bisector problem. 

\begin{ack}
We thank Oswin Aichholzer, Erik Demaine, Robert Lang, Stefan Langerman, and Joe O'Rourke for helpful conversations and clarifications, and especially Lior Pachter for motivating this question.  We are also grateful to Williams College and to the NSF for partially supporting this work with grant DMS-0850577.
\end{ack}

%
%
\section{Preliminaries and Properties}  \label{s:pre}
\subsection{}

In this paper, whenever the term polygon is used, we mean a simple polygon $P$.   The \emph{medial axis} of $P$ is the set of points in its interior which are equidistant from two or more edges of $P$.  It is well known that if the polygon is convex, the medial axis is a tree \cite[Chapter 5]{do}: The leaves of this tree are the vertices of $P$, and the internal nodes are  points of $P$ equidistant to three or more sides of the polygon. 

If the polygon has a reflex vertex, however, the medial axis (in general) will have a parabolic arc.   The \emph{straight skeleton} $\str(P)$ of a polygon $P$ is a natural generalization of the medial axis, which constructs a straight-line metric tree for any simple polygon \cite{ah}:  For a polygon, start moving all of the sides of the polygon inward at equal velocity, parallel to themselves.  These lines, at each point of time, bound a similar polygon to the original one, but with smaller side lengths. Continue until the topology of the polygon traced out by this process changes. One of two events occur:
\begin{itemize} 
\item[1.] \emph{Shrink event}: \ When one of the original sides of the polygon shrinks to a point, two non-adjacent sides of the polygon become adjacent.  Continue moving all the sides inward, parallel to themselves again. 
\item[2.] \emph{Split event}: \ When one of the reflex vertices in the shrinking polygon touches a side of the polygon, the shrinking polygon is split into two.  Continue the inward line movement in each of them. 
\end{itemize}
The straight skeleton is defined as the set of segments traced out by the vertices of the shrinking polygons in the above process. Indeed, the straight skeleton is a tree, with the vertices of the polygon as leaves.  Figure~\ref{f:ss}(a) shows the example of the medial axis of a nonconvex polygon, along with its piecewise-linear straight skeleton in part (b).

\begin{figure}[h]
\includegraphics{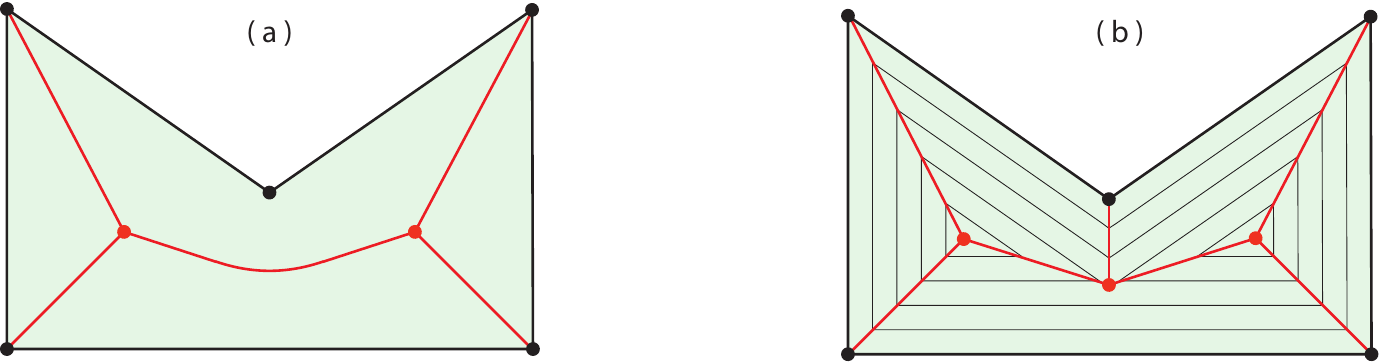}
\caption{(a) Medial axis and (b) straight skeleton.}
\label{f:ss}
\end{figure}

For convex polygons, the medial axis and the straight skeleton coincide. Furthermore, it will be useful for us to view the medial axis through this straight skeleton process lens. In this case, only shrink events occur. 

\subsection{}

Consider a map $\treemap$ from the space of simple polygons to the space of phylogenetic trees, defined as sending a polygon to its straight skeleton. In this section, we explore some basic properties of the map.  The reader is encouraged to consult \cite{euro}, where geometric details are given, and from which we inherit certain terminology.

\begin{defn}
Each edge of a \emph{phylogenetic tree} is assigned a nonnegative length, and each internal vertex has degree at least three.  A phylogenetic tree where the cyclic order of incident edges around every vertex is predefined is called a phylogenetic \emph{ribbon tree}.\footnote{This is sometimes called a \emph{fatgraph} as well \cite{pkwa}.}
\end{defn}

\begin{note}
Let $\G$ be the set of phylogenetic ribbon trees.  Let $E(G)$ denote a planar embedding of $G \in \G$ with straight-line edges, and respecting the predefined cyclic ordering around each vertex.  Moreover, define $P_{E(G)}$ to be the polygon resulting from connecting the leaves of $E(G)$ in cyclic order traversing around the tree.
\end{note}

\begin{defn}
A simple polygon $P_{E(G)}$ is \emph{suitable} if $E(G)$ equals its straight skeleton $\str(P_{E(G)})$, called a \emph{skeletal configuration} of $G$.  If there exists a suitable polygon for a tree $G \in \G$, we say $G$ is \emph{feasible}.
\end{defn}

We consider two natural examples of trees: stars and caterpillars.  A \emph{star} $S_n$ has $n+1$ vertices, with one vertex of degree $n$ connecting to $n$ leaves.  A \emph{caterpillar} becomes a path if all its leaves are removed.

\begin{prop}
There exist infeasible stars and caterpillars of $\G$. Thus $\treemap$ is not surjective.
\end{prop}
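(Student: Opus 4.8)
The plan is to exhibit, for suitable leaf counts, explicit stars and caterpillars — with explicit edge lengths — that arise as the straight skeleton of no simple polygon; since $\treemap$ sends a polygon to a straight-line embedded tree, such examples lie outside its image and non-surjectivity follows at once. The common engine is elementary bookkeeping of the shrinking-polygon flow. If a vertex $v$ of $P$ with interior angle $\alpha$ traces a skeleton edge of length $\ell$ before being absorbed at an internal node $u$, then $v$ moves along the bisector at $v$ at speed $1/\sin(\alpha/2)$, so $u$ lies at perpendicular distance $\ell\sin(\alpha/2)$ from both sides of $P$ at $v$ — its ``arrival time''. Hence all leaf-edges meeting at a common node have equal arrival time; the interior angles of a simple $n$-gon sum to $(n-2)\pi$; and every polygon produced by an intermediate shrink or split event must be non-degenerate. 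I will treat the case of convex $P$ directly, and dispose of a reflex vertex via the structural fact that it forces a split event and splits off a proper sub-skeleton, which one checks — see \cite{euro} — is incompatible with the target trees below.

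\emph{Stars.} If $E(S_n)=\str(P)$ with $P=P_{E(S_n)}$, then $P$ is convex, and its unique skeleton node $O$ — the point to which the whole flow collapses — is equidistant from all $n$ sides; thus $P$ is tangential, with incircle of radius $r$ centered at $O$, the leaves $v_i$ of the tree are the vertices of $P$, and $\ell_i=|O v_i|$. Writing $\alpha_i$ for the interior angle at $v_i$ we get $\sin(\alpha_i/2)=r/\ell_i$, so $\sum_i\alpha_i=(n-2)\pi$ is equivalent to
\[
  \sum_{i=1}^{n}\arcsin\bigl(r/\ell_i\bigr)=\frac{(n-2)\pi}{2},\qquad 0<r<\min_i\ell_i .
\]
The left side is continuous and strictly increasing in $r$, hence stays below its limiting value at $r=\min_i\ell_i$; taking $\ell_1=1$ and $\ell_2=\dots=\ell_n=N$ bounds that value by $\tfrac{\pi}{2}+(n-1)\arcsin(1/N)$, which is $<\pi\le\tfrac{(n-2)\pi}{2}$ once $n\ge4$ and $N$ is large. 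So no such $r$ exists; concretely, $S_4$ with leaf lengths $(1,3,3,3)$ is infeasible.

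\emph{Caterpillars.} Consider the binary tree on $4$ leaves — a caterpillar whose spine is a single edge — with all four leaf-edges of a common length $\ell$ and internal edge of length $m>0$, and suppose it equals $\str(P)$ with $P$ a quadrilateral $abcd$. Then $P$ is convex, and since $a,b$ lie on a common node and $c,d$ on the other, the only possibilities are that side $ab$ collapses first (to a node $u_1$, leaving the triangle $u_1cd$, which collapses to $u_2$) or that side $cd$ collapses first (symmetric). In the first case, with $\alpha,\beta,\gamma,\delta$ the interior angles at $a,b,c,d$: equality of arrival times at $u_1$ and at $u_2$ gives $\sin(\alpha/2)=\sin(\beta/2)$ and $\sin(\gamma/2)=\sin(\delta/2)$, so $\alpha=\beta$, $\gamma=\delta$, and $\alpha+\gamma=\pi$; since $u_1$ precedes $u_2$, $\sin(\alpha/2)<\sin(\gamma/2)$, whence $\gamma>\alpha$ and so $\gamma>\pi/2$. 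But the surviving triangle $u_1cd$ has angle $\pi-\gamma-\delta=\pi-2\gamma$ at $u_1$, which must be positive, i.e.\ $\gamma<\pi/2$ — a contradiction. The second case is identical with $(\alpha,\beta)$ and $(\gamma,\delta)$ exchanged. Hence this caterpillar is infeasible, and, together with the star above, $\treemap$ is not surjective.

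The step I expect to be the main obstacle is the reduction to convex $P$: the clean computations assume $P$ has no reflex vertex, and making the general case rigorous needs the structural analysis (a reflex vertex forces a split event, producing skeleton combinatorics or metric relations that cannot match the target), which is the natural province of \cite{euro}. A secondary, routine point is the case split over which internal node the flow reaches first. If one wanted an argument more robust against non-convexity, one could instead take an $n$-leaf caterpillar one of whose internal nodes carries many equal-length leaf-edges, forcing many equal interior angles whose required sum cannot be attained even after the bounded number of available reflex vertices is used up; but the $4$-leaf example already proves the proposition.
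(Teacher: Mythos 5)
Your star argument is sound and in fact takes a slightly different route from the paper's: you derive a contradiction from the polygon's interior angle sum $\sum_i 2\arcsin(r/\ell_i)=(n-2)\pi$ having no admissible solution $r$, whereas the paper works with the angles around the central node $O$ summing to $2\pi$ (using an $S_{3n}$, $n\ge 3$, with alternating lengths $x,y,y$ and $x/y$ small, forcing at least three angles at $O$ to exceed $2\pi/3$). Your version needs only four leaves and is arguably cleaner; both arguments share the same unaddressed (or only lightly addressed) reduction to convex polygons, which you at least flag explicitly.

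The caterpillar half, however, contains a genuine error. Your case analysis omits the possibility that the edges $ab$ and $cd$ collapse \emph{simultaneously}, i.e.\ $t_1=t_2$, in which case your own equations give $\alpha=\beta=\gamma=\delta=\pi/2$ and no contradiction arises: the internal edge $u_1u_2$ is then traced from both ends and the chronological center is an interior point of that edge. This is not a degenerate case to be waved away --- it is realized. The rectangle with short side $\ell\sqrt{2}$ and long side $\ell\sqrt{2}+m$ has as its straight skeleton exactly the $4$-leaf caterpillar with four leaf-edges of length $\ell$ and internal edge of length $m>0$ (the two short sides collapse at the same instant to the two internal nodes). So the tree you claim is infeasible is feasible, and the proposition's assertion about caterpillars is not established. (Indeed, the paper's own non-injectivity example uses orthogonal polygons whose skeletons are precisely such equal-leaf-edge caterpillars.) The fix is to use genuinely unbalanced edge lengths: the paper perturbs its infeasible star into a caterpillar by splitting the center into a path of internal vertices joined by arbitrarily short internal edges, so that the ``angles around an internal vertex sum to $2\pi$'' obstruction survives; your own star obstruction (one short leaf-edge against many long ones at a single internal node) could be adapted the same way, but as written the $4$-leaf example does not work.
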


\begin{proof}
Consider a star $S_{3n}$ with edges $e_0, e_1, \ldots, e_{3n-1}$ in clockwise order. We set the edge length of $e_i$ to be equal to $x$ if $i \equiv 0$ mod 3, and equal to $y$ otherwise.  We claim that if $n \geq 3$ and the ratio $x/y$ is sufficiently small, then this tree cannot be the straight skeleton of any polygon; see Figure~\ref{f:inj}(a) for the case $n=3$.

\begin{figure}[h]
\includegraphics{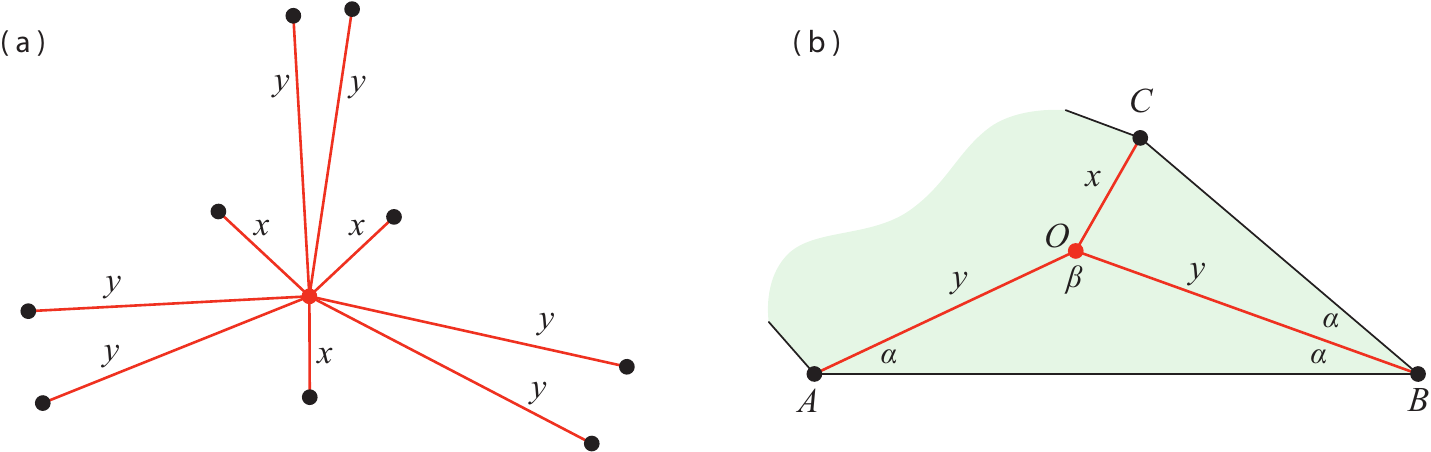}
\caption{Infeasible stars.}
\label{f:inj}
\end{figure}

Let $O$ be the center of the star, and $A, B, C$ denote leaves with edges $AO$ and $BO$ of length $y$ and $CO$ of length $x$;  see Figure~\ref{f:inj}(b).
In order to have a straight skeleton, edge $BO$ must bisect $\angle ABC$ into two angles of measure $\alpha$.  Defining $\beta := \angle AOB$, note that since triangle $AOB$ is isosceles, $\beta = \pi - 2\alpha$.  As the length of $x$ decreases relative to $y$, $\alpha$ becomes arbitrarily close to zero. Then $\beta$ must approach $\pi$; specifically, $\beta$ can be made greater than $2\pi/3$.  Since star $S_{3n}$ has $n \geq 3$ groups of three consecutive $\{x,y,y\}$ edges, then at least three such angles $\beta$ around the center $O$ are greater than $2\pi/3$, giving a contradiction.

\begin{figure}[h]
\includegraphics{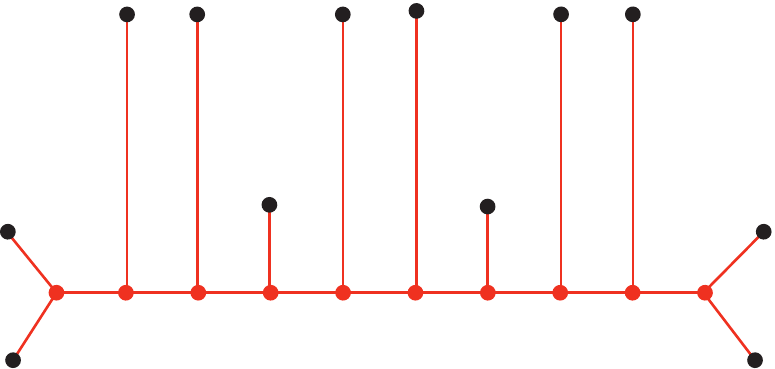}
\caption{Infeasible caterpillars.}
\label{f:inj-proof}
\end{figure}

The stars above can be tweaked to show infeasible caterpillars, as given in Figure~\ref{f:inj-proof}, with a nearly identical proof, taking internal edges to be arbitrarily small.
\end{proof}

\begin{prop} 
There are feasible caterpillars for which multiple suitable polygons exists.  Thus, $\treemap$ is not injective.
\end{prop}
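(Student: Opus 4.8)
The plan is to exhibit, explicitly, a single metric ribbon caterpillar together with a one-parameter family of pairwise non-congruent convex polygons each of which is suitable for it; this forces $\treemap$ to be far from injective. The caterpillar we use is the smallest non-star one, the ``$H$'': two internal nodes $u,v$ joined by an edge of length $e>0$, two leaf edges of length $r$ incident to $u$ and two more of length $r$ incident to $v$, with the cyclic order at $u$ placing its two leaves on the two sides of the edge $uv$ (and symmetrically at $v$), so that $P_{E(H)}$ is a quadrilateral.

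I would first invoke the fact recalled in Section~\ref{s:pre} that for convex polygons the straight skeleton coincides with the medial axis; it therefore suffices to produce convex quadrilaterals whose medial axis, as an embedded metric ribbon tree, is exactly this $H$. The natural candidates are isosceles trapezoids. For a trapezoid with parallel sides of lengths $B\ge T$, height $h$, and legs of length $L=\sqrt{h^2+((B-T)/2)^2}$, the horizontal axis of symmetry forces the two medial-axis nodes onto the midline (each is equidistant from the two parallel sides), and the up--down reflection forces the two leaf edges at each node to be equal. A short computation places the left node at $(-e/2,\,h/2)$ with $e=\tfrac{B+T}{2}-L$ (which holds throughout the family since $e>0$) and gives the leaf length by $2r^2=L(L+d)$, where $d=\tfrac{B-T}{2}$; equivalently $d=(2r^2-L^2)/L$ and $h=\sqrt{L^2-d^2}$, with $L$ free in the interval $(r,\,r\sqrt2\,]$. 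Thus the two invariants $(r,e)$ of $H$ stay fixed while the triple $(B,T,h)$ traces an honest curve of trapezoids, running from a flat degenerate limit up to the rectangle of width $\sqrt2\,r$ and length $e+\sqrt2\,r$ at $L=r\sqrt2$.

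Next I would check that each member of this family is genuinely \emph{suitable}, not merely that its medial axis has the right edge lengths. Convexity guarantees the medial axis is a tree with the four vertices as leaves; one verifies that the fourth side is strictly farther from each node than the three defining sides (this reduces to $2e+L>L$, always true), so no extra equidistant point appears and the two internal nodes are distinct as long as $e>0$ and $B\neq T$ — which is why the family must stay away from the rhombus/square locus, where a degenerate $S_4$ would arise instead. The polygon obtained by joining the leaves of the embedded tree in cyclic order is precisely the trapezoid, and the ribbon structure around each node matches the prescribed one. This yields infinitely many pairwise non-congruent suitable polygons for the one caterpillar $H$, hence in particular at least two (for instance the rectangle and any trapezoid with $B\neq T$), which proves the statement.

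I expect the only delicate point to be the verification in the previous paragraph: one must make sure that the ``$H$'' written down really is the medial axis / straight skeleton of the trapezoid, and that passing to non-generic members (nearly flat trapezoids, or the rhombus limit) does not silently alter the combinatorics of the skeleton or collapse an internal edge. Restricting $L$ to a compact subinterval of $(r,r\sqrt2)$ side-steps both degeneracies and leaves a robust one-parameter family, so this obstacle is manageable rather than serious.
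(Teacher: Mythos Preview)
Your argument contains a genuine error that makes the whole construction collapse.  An isosceles trapezoid has a \emph{vertical} (left--right) axis of symmetry, but no horizontal (up--down) one unless it is a rectangle.  Consequently the two leaf edges meeting at a node of the medial axis are \emph{not} equal in general.  With your notation ($d=(B-T)/2$, $L=\sqrt{h^2+d^2}$, node $P=(L/2-(B+T)/4,\,h/2)$), a direct computation gives
\[
|AP|^2=\frac{L(L+d)}{2},\qquad |DP|^2=\frac{L(L-d)}{2},
\]
so the bottom and top leaf edges at $P$ have different lengths whenever $d\neq 0$.  Your formula $2r^2=L(L+d)$ records only the first of these; imposing the second forces $d=0$, and the ``one-parameter family'' degenerates to the single rectangle.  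Equivalently, the three metric invariants $(e,r_1,r_2)$ of the $H$ determine $(B,T,h)$ uniquely via $L^2=r_1^2+r_2^2$, $Ld=r_1^2-r_2^2$, $B+T=2(e+L)$, $B-T=2d$.  (There is also a sanity check you should have noticed: a continuum of \emph{convex} suitable polygons for a $4$-leaf tree would contradict Theorem~\ref{t:rigidity_convex}, which bounds the number by $2n-5=3$.)

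The paper's proof is quite different: it avoids convex polygons altogether and uses an explicit caterpillar with long internal edges that is the straight skeleton of several non-convex \emph{orthogonal} polygons.  The key move is geometric rather than analytic---cut such a polygon along a line perpendicular to a long skeleton edge, reflect one half, and reglue; the straight skeleton is unchanged.  This produces finitely many (not a continuum of) distinct suitable polygons for a single caterpillar, which is all that is needed for non-injectivity.
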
 

\begin{proof} 
Consider the tree $G$, as drawn in Figure~\ref{f:counter-sur-tree}, with two edge lengths $x$ and $y$, with $y$ substantially larger than $x$.  
\begin{figure}[h]
\includegraphics{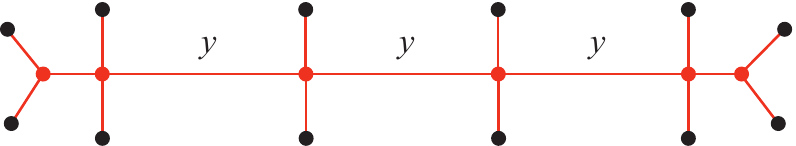}
\caption{Caterpillars with multiple suitable polygons.}
\label{f:counter-sur-tree}
\end{figure}
There exist multiple orthogonal polygons with straight skeletons equivalent to $G$, as in Figure~\ref{f:counter-sur-poly}.  We can divide a polygon into two pieces by cutting along a line perpendicular to a skeleton edge of length $y$.  Fixing one side of the divided polygon, reflecting the other side, and then gluing results in another polygon with the same straight skeleton.  Note that this transformation can be performed at each skeleton edge of length $y$.
\end{proof}

\begin{figure}[h]
\includegraphics{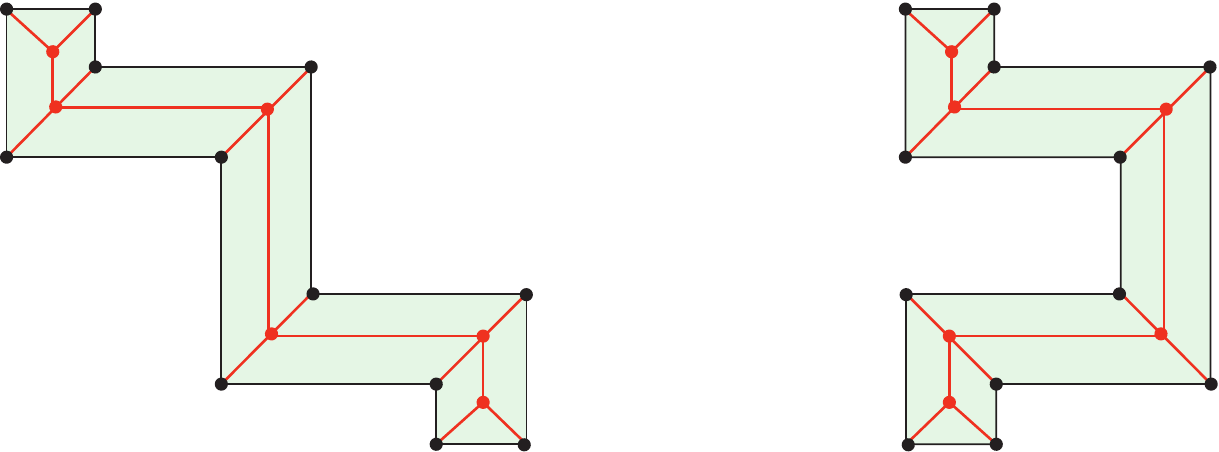}
\caption{Distinct suitable polygons for the same caterpillar graph.}
\label{f:counter-sur-poly}
\end{figure}

%
%
\section{Tree Topology} \label{s:top}

We now consider questions related to the deformations of phylogenetic trees and their respective polygons from a topological perspective.  Indeed, as one would expect, disregarding the edge lengths trivializes numerous questions.  We say two trees in $\G$ have the same \emph{topology} if they are equivalent as abstract graphs.  
If we consider $\treemap$ as a map between polygons and \emph{topological} trees, then the following shows the surjectivity of this map. 

\begin{prop}
\label{p:top-feasible}
For any $G \in \G$, there exists a feasible tree with the topology of $G$.
\end{prop}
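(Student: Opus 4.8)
The plan is to prove the sharper statement that every tree all of whose internal vertices have degree at least three arises, up to abstract isomorphism, as the medial axis of some convex polygon; this implies the proposition. Indeed, for a convex polygon $P$ the leaves of its medial axis are exactly its vertices, met in cyclic order as one walks around the tree, so reconnecting these leaves returns $P$ itself; hence $P_{E(G)}=P$ and $\str(P_{E(G)})$ is the medial axis, which is $E(G)$, so $P$ is suitable. Since ``topology'' here means the underlying abstract graph, the ribbon structure may be ignored. I would argue by induction on the number of internal vertices. When $G$ is a star $S_n$, the regular $n$-gon works: its incenter is its center, equidistant from all $n$ sides; the $n$ vertex-bisectors all pass through it; and since the medial axis of a convex polygon is a tree whose leaf set is exactly the vertex set, it can contain nothing further without forming a cycle. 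So every star is feasible.

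For the inductive step, $G$ has at least two internal vertices, and these induce a connected subtree, so $G$ has an internal edge $\varepsilon=v_1v_2$. Contracting $\varepsilon$ yields a tree $G'$ with all internal degrees still $\ge 3$ (the merged vertex $v$ has degree $\deg v_1+\deg v_2-2\ge 4$) and one fewer internal vertex; by induction $G'$ is realized by a convex polygon $P'$ whose medial axis is an embedded tree $E(G')\cong G'$. In $E(G')$ the vertex $v$ is a point equidistant from $d:=\deg v_1+\deg v_2-2$ sides of $P'$ (it is the center of an inscribed disk tangent to all $d$ of them), and the $d$ medial-axis edges at $v$ lie along the bisectors of the $d$ cyclically consecutive pairs of these sides. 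Recovering $G$ from $G'$ means re-splitting $v$: distribute the $d$ edges at $v$ into two consecutive cyclic arcs, of sizes $\deg v_1-1$ and $\deg v_2-1$; the two sides of $P'$ lying in the two gaps between the arcs are precisely the pair whose bisector will carry the reinstated short edge $v_1v_2$.

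The geometric heart is then a local perturbation: move the $d$ side-lines of $P'$ by an arbitrarily small amount so that, in a small neighborhood of the former location of $v$, only those two distinguished sides remain nearest, thereby replacing the point $v$ by a short segment, while the other $d-2$ side-lines recede just enough that their bisectors reattach to $v_1$ or to $v_2$ according to the chosen arcs. One then checks that such a perturbation exists, that it keeps $P'$ convex (hence a legitimate simple polygon whose straight skeleton is again its medial axis), and that the perturbed medial axis is abstractly isomorphic to $G$.

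I expect the main obstacle to be exactly this perturbation lemma. The condition ``$v$ is equidistant from $d$ sides'' is overdetermined, of codimension $d-3$ among convex $n$-gons, so one must analyze how the confluence at $v$ disintegrates under a generic perturbation and confirm that the prescribed two-block pattern, and not merely some resolution of it, can be targeted; moreover this must be arranged without disturbing the rest of $E(G')$, in particular any other high-degree vertices it may have, if necessary by compensating adjustments to the remaining side-lines or by reconstructing the medial axis near $v$ directly and invoking structural stability of the medial axis away from $v$. An alternative organization avoids reusing the polygon for $G'$: build a realizing polygon for $G$ directly by running the inward-offset process of Section~\ref{s:pre} backwards, starting from a tiny convex $d_0$-gon collapsing onto a chosen root internal vertex of degree $d_0$, and, moving outward, inserting one reverse shrink event at each degree-three internal vertex (and simultaneous reverse shrink events at higher-degree ones) until every branch terminates at a polygon vertex; there the obstacle reappears as the need to meet, around each internal vertex, the bisector-consistency (``developing map'') constraint while keeping all edge lengths compatible with a single global offset parameter.
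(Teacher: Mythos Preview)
Your primary route (contract an internal edge, realize the smaller tree $G'$ by a convex polygon $P'$, then perturb $P'$ to re-split the merged vertex) is not the paper's, and the gap you yourself flag is genuine: you never establish the perturbation lemma. Resolving a degree-$d$ medial-axis vertex in a \emph{prescribed} two-block pattern, rather than into some generic trivalent resolution, while leaving the rest of the skeleton of $P'$ undisturbed, is exactly the hard step, and nothing in your write-up supplies it.

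The paper sidesteps this by running the construction in the opposite direction. It fixes an internal vertex $r$ of $G$, starts from the regular $(\deg r)$-gon (whose skeleton is the star at $r$), and then iteratively \emph{truncates}: whenever a polygon vertex corresponds to a tree vertex $v_i$ with $k$ further neighbours $s_1,\dots,s_k$, replace that vertex by $k$ nearby vertices forming $k{-}1$ very short edges, arranged so that the $k$ new angle bisectors concur at a single point. This is a constructive local move---one is \emph{building} the new degree-$k$ skeleton vertex directly, not hoping a degenerate one resolves the right way---and because the truncation can be taken arbitrarily small it does not disturb the skeleton already constructed. Repeating until every polygon vertex is associated to a leaf of $G$ yields a convex polygon whose straight skeleton has the topology of $G$.

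Your ``alternative organization'' at the end (run the offset process backwards from a chosen root, inserting reverse shrink events) is in fact the paper's idea viewed from the skeleton side rather than the polygon side. Had you led with that and written out why a sufficiently small truncation with concurring bisectors can always be made at a convex vertex, you would have had essentially the paper's proof.
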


\begin{proof}
Let $r$ be an internal vertex of $G$, and let $v_1, \ldots v_n$ denote the vertices adjacent to $r$.  Associate each vertex of a regular $n$-gon $P$ to a unique $v_i$, preserving the cyclic ordering in $G$.  We algorithmically modify $P$ so that its straight skeleton has the topology of $G$, as follows. Suppose that some $v_i$ is adjacent to $k$ other vertices $s_1, \ldots, s_k$ of $G$ (of depth 2). Then replace the vertex of $P$ associated to $v_i$ with $k$ new vertices forming $k-1$ sufficiently small edges in such a way that the angle bisectors of the $k$ vertices all intersect at a point. One can view this as `truncating' the vertex to create $k$ new vertices (see Figure~\ref{f:truncation}); this is possible since the truncations can be arbitrarily small.  Associate each new vertex of the truncated polygon to each $s_i$, again preserving the cyclic ordering of the vertices. This truncation procedure is repeated until the process terminates, that is, until each vertex of the truncated polygon is associated to a leaf in $G$. The final polygon then has a straight skeleton with the same topology as $G$. 
\end{proof}

\begin{figure}[h]
\includegraphics[width=\textwidth]{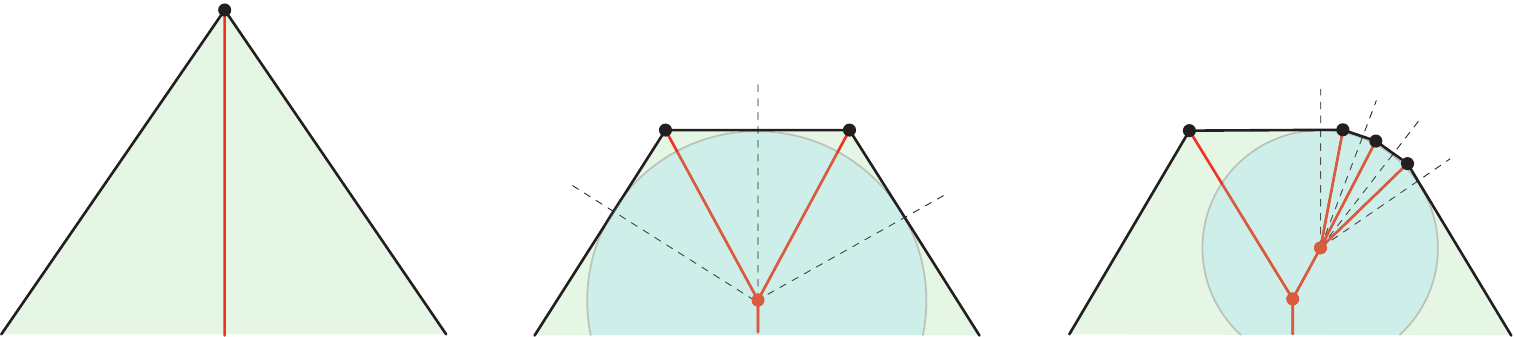}
\caption{Iterative truncation of a polygon resulting in a feasible tree.}
\label{f:truncation}
\end{figure}

For a polygon $P$ with $n$ edges, we may embed $P$ in the plane and let $(x_i,y_i)$, $1 \leq i \leq n$ denote its vertices in $\R^2$. So we can associate $P$ to a point $\gamma(P)$ in $\R^{2n}$, where
$$\gamma(P) \ = \ (x_1,y_1,x_2,y_2,\ldots,x_n,y_n).$$
Two polygons $P_1$ and $P_2$ are \emph{isotopic} if there exists a (continuous) isotopy $f:[0,1] \longrightarrow \R^{2n}$ such that $f(0)=\gamma(P_1)$, $f(1)=\gamma(P_2)$, and for every $t \in [0,1]$, $f(t)=\gamma(P(t))$ for some simple polygon $P(t)$, and some appropriate ordering of the vertices.
The following shows that \emph{convex} polygons with identical event chronologies in the preimage of a topological tree under $\treemap$ are in the same isotopy class.
 
\begin{thm}
\label{t:top-skeleton}
Let $P_1$ and $P_2$ be \emph{convex} polygons having identical event chronologies, with topologically equivalent straight skeletons.  Then there exists an isotopy between $P_1$ and $P_2$ such that $\str(P(t))$ is topologically equivalent to $\str(P_1)$ and $\str(P_2)$, for all $t$.
\end{thm}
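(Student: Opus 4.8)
The plan is to build the isotopy from $P_1$ to $P_2$ by following the shrinking-polygon process that defines the straight skeleton, and interpolating at each stage between the two "snapshots." Since $P_1$ and $P_2$ are convex with identical event chronologies, only shrink events occur, and they occur in the same combinatorial order: there is a sequence of combinatorial types $\tau_0 \to \tau_1 \to \cdots \to \tau_m$ (a point), where $\tau_0$ is the common type of $P_1$ and $P_2$ and each arrow collapses one edge to a point. For each polygon $P_j$ ($j=1,2$) the inward-offsetting process produces, for each time $t$ in some interval, a convex polygon $P_j(t)$; at the event times the polygon degenerates and then re-emerges with one fewer side. I would first record the key observation that a convex polygon with a prescribed combinatorial type $\tau_i$ is determined up to similarity by its interior angles together with the edge lengths, and — more usefully here — that the space of convex polygons realizing $\tau_i$ with a fixed set of edge directions (equivalently, fixed exterior angles) and varying edge lengths, subject to the closure condition, is a convex polytope, hence connected. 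This is the workhorse: within a fixed combinatorial-and-angular type, the realization space is convex.

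**Carrying out the interpolation.**

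I would proceed by induction on the number of events $m$. For the base case $m=0$ the polygons are (combinatorially) triangles or have already collapsed; a convex polygon with a fixed number of sides and fixed edge directions that differs only in side lengths can be linearly interpolated in $\R^{2n}$ and stays convex and simple throughout, and its straight skeleton keeps the same topology since no event order can change (there are no events, or a single terminal one). For the inductive step: let $t^\star$ be the time of the first event, common to both processes. The offset polygon $P_1(t^\star - \epsilon)$ is a convex polygon with one short edge about to vanish; likewise $P_2(t^\star-\epsilon)$. First I interpolate the edge \emph{directions}: rotate each edge normal of $P_1$ linearly to the corresponding normal of $P_2$; since both families of normals are in convex position around the circle (convexity) and in the same cyclic order (same topology), the intermediate normal configurations are also in convex position, so they are the exterior-angle data of some convex polygon. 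Then, with directions fixed along the way, I interpolate the edge \emph{lengths} through the closure polytope. Throughout, I keep the "soon-to-collapse" edge positive but small, so $P(t)$ remains a genuine convex polygon with combinatorial type $\tau_0$ whose offsetting has first event exactly the collapse of that edge — hence $\str(P(t))$ has the topology of $\tau_0 \to \tau_1 \to \cdots$, which is that of $\str(P_1)$. Finally, apply the inductive hypothesis to the smaller polygons obtained by performing that first shrink event (the resulting $(n-1)$-gons $P_1(t^\star+\epsilon)$ and $P_2(t^\star+\epsilon)$ are convex, have the same event chronology with $m-1$ events, and topologically equivalent skeletons), and concatenate the isotopies.

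**The main obstacle.**

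The hard part will be ensuring that no \emph{new} event is accidentally created, and that no two events swap order, during the interpolation — i.e. that the event chronology really is preserved for every intermediate $P(t)$, not just at the endpoints. Controlling the \emph{first} event is manageable because "the shortest offset-edge shrinks first" is an open condition that I can enforce by keeping one edge uniformly shorter than the others along the path; but I must check that the offsetting of an intermediate convex polygon cannot, say, produce two simultaneous collapses or reorder later collapses. The cleanest fix is to do the interpolation one event at a time as above, truncating the process at $t^\star$ before worrying about anything downstream, so that at each stage I only need the first-event structure to be stable — and "first event = collapse of the designated short edge" is robust under small perturbations of angles and lengths. A secondary technical point is gluing: I must verify that $\gamma(P(t))$ traces a continuous path in $\R^{2n}$ across the moment where I switch from "interpolating the $n$-gon" to "interpolating the $(n-1)$-gon after a shrink," which follows because as the short edge length tends to $0$ the $n$-gon coordinates converge to those of the $(n-1)$-gon with a repeated vertex, and a relabeling/limiting argument (allowed by the "appropriate ordering of the vertices" clause in the definition of isotopy) makes the concatenation continuous.
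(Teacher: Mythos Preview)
Your overall strategy---induct on the number of events and use the shrinking process to reduce to one fewer vertex---is the same as the paper's. But the heart of your inductive step has a real gap.

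You interpolate directly from $P_1$ (or $P_1(t^\star-\epsilon)$) to $P_2$ by moving edge directions and then edge lengths, keeping the designated edge short so that its collapse is the first event throughout. You then write ``hence $\str(P(t))$ has the topology of $\tau_0 \to \tau_1 \to \cdots$.'' This is the error: fixing the first event pins down only the $\tau_0 \to \tau_1$ layer of the skeleton. Everything after that is governed by the $(n-1)$-gon $P(t)'$ obtained after the collapse, and your direction/length interpolation gives no control over the event chronology of $P(t)'$ for intermediate $t$. Invoking the inductive hypothesis does not rescue this: induction tells you that \emph{some} topology-preserving path from $P_1'$ to $P_2'$ exists, not that the \emph{particular} path $t \mapsto P(t)'$ traced out by your interpolation is such a path. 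So ``concatenate the isotopies'' has no clear meaning---the two isotopies you name live on polygon spaces of different dimensions and are not a priori compatible. (A smaller slip: the first-event \emph{times} for $P_1$ and $P_2$ need not coincide; only the combinatorial chronology is shared.)

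The paper closes exactly this gap by running the construction in the other direction. It first takes the inductive isotopy $f'$ between the $(n-1)$-gons $P_1'$ and $P_2'$, and then \emph{lifts} it to an isotopy $f''$ between the near-degenerate $n$-gons by reinserting the short edge along $f'$. The shrinking isotopies $g_1,g_2$ then connect $P_1,P_2$ to the endpoints of $f''$, and $g_2^{-1}\circ f'' \circ g_1$ is the answer. The point is that the $n$-gon isotopy is \emph{built from} the inductive one, so the post-first-event chronology is correct by construction, rather than constructed independently and hoped to agree with it downstream. Your direction/length machinery could in fact be used to carry out that lifting explicitly, but as written it is doing a different (and insufficient) job.
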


\begin{proof}
We proceed by induction on the number of vertices of $P_1$ and $P_2$. The base case on three vertices is easily checked, so assume the statement holds for polygons on $n$ vertices, and suppose that $P_1$ and $P_2$ have $n+1$ vertices. Observe that the shrinking process that defines the straight skeleton is an isotopy of polygons (under a suitable time reparameterization), until an event occurs. Since $P_1$ and $P_2$ are convex, the only events which can occur are shrink events, with a polygonal edge shrinking to zero.  When these events occur, the number of vertices of the shrinking polygon decreases by at least one, so we obtain convex polygons $P_1'$ and $P_2'$ on $n$ vertices, whose straight skeletons have identical event chronologies and the same topologies. 

By the induction hypothesis, there is an isotopy $f'$ between $P_1'$ and $P_2'$ fulfilling the desired conditions. Now consider the isotopy $g_1$ obtained from stopping the shrinking processes for $P_1$ at some time $\varepsilon_1$ before the first event occurs; see Figure~\ref{f:cont-top}(a) to (b). Similarly, we let $g_2$ denote the isotopy from stopping $P_2$ at some time $\varepsilon_2$ before the first event; see Figure~\ref{f:cont-top}(d) to (c). For sufficiently small $\varepsilon_1$ and $\varepsilon_2$, the vertices of the polygons at the ends of the isotopies $g_1$ and $g_2$ become arbitrarily close to the vertices of $P_1'$ and $P_2'$. 
We can then use $f'$ to construct an isotopy $f''$ (from part (b) to (c) of the figure) such that $g_2^{-1} \circ f'' \circ g_1$ is the desired isotopy from $P_1$ to $P_2$ (up to time reparameterization) satisfying the conditions of the theorem.
\end{proof}

\begin{figure}[h]
\includegraphics[width=\textwidth]{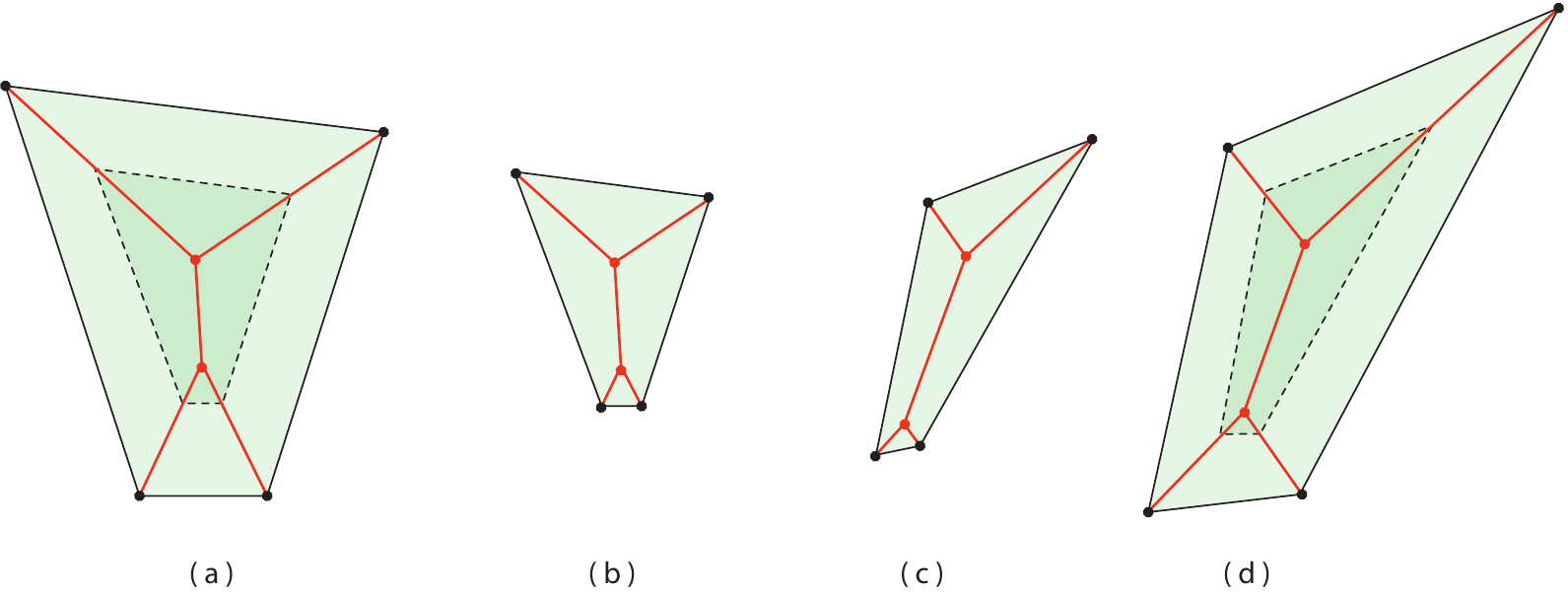}
\caption{Isotopic deformation of the straight skeletons.}
\label{f:cont-top}
\end{figure}

%
%
\section{Velocity Framework}  \label{s:velocity}

The following sections consider an appropriate concept of rigidity for phylogenetic trees.  Recall that a skeletal configuration of a phylogenetic tree $G \in \G$ is a planar embedding of $G$ such that it is the straight skeleton of the polygon that is determined by the leaves.

\begin{defn} 
A skeletal configuration $E(G)$ of a tree $G \in \G$ is $\emph{rigid}$ if, in the space of deformations of planar embeddings of $G$, $E(G)$ is an isolated skeletal configuration.
\end{defn}

\noindent Our main result, Theorem~\ref{t:rigidity_convex}, shows a condition stronger than rigidity when restricted to the case of convex polygons.  Our method of attack is to consider an alternative method based on \emph{velocity} to interpret how the straight skeleton is constructed:  Since each edge $e$ of the straight skeleton is traced out by the polygonal vertices, a velocity can be assigned to the vertex tracing out edge $e$. 

We start by applying this idea to the simplest types of trees:   Consider a star $S_n \in \G$, consisting of a center vertex $O$, and edges $e_i$ incident to leaves $v_i$.  During the straight skeleton construction, assuming each polygonal edge moves inward with unit velocity, each vertex $v_i$ of the polygon can be assigned a velocity $\vel_i$. Then the amount of time vertex $v_i$ requires to traverse its edge $e_i$ is $l(e_i)/\vel_i,$ where $l(e_i)$ is the length of edge $e_i$. Since $S_n$ is a star, all vertices start and end their movement during the skeletal construction at the same time, meeting at the center $O$. Thus 
\begin{equation}
\label{e:length-vel}
\frac{l(e_i)}{\vel_i} \ = \ \frac{l(e_1)}{\vel_1},
\end{equation}
for each $2 \leq i \leq n$, establishing the relative velocities of all the vertices.

There is also a useful relationship between the angles subtended at the vertices and their speed.  Assume that the angle at vertex $v$ is $2 \alpha$ if the angle at $v$ is convex, or $2 \pi - 2 \alpha$ if it is reflex.  If the sides incident on $v$ have been moving for time $t$ at unit speed, they will have traversed $t$ units, with vertex $v$ reaching another point $w$.  From Figure~\ref{f:angle}, it follows that $t(\sin \alpha)^{-1}$ is the length of $vw$, implying the velocity  of vertex $v$ to be  $\vel \ = \ (\sin \alpha)^{-1}.$
\begin{figure}[h]
\includegraphics{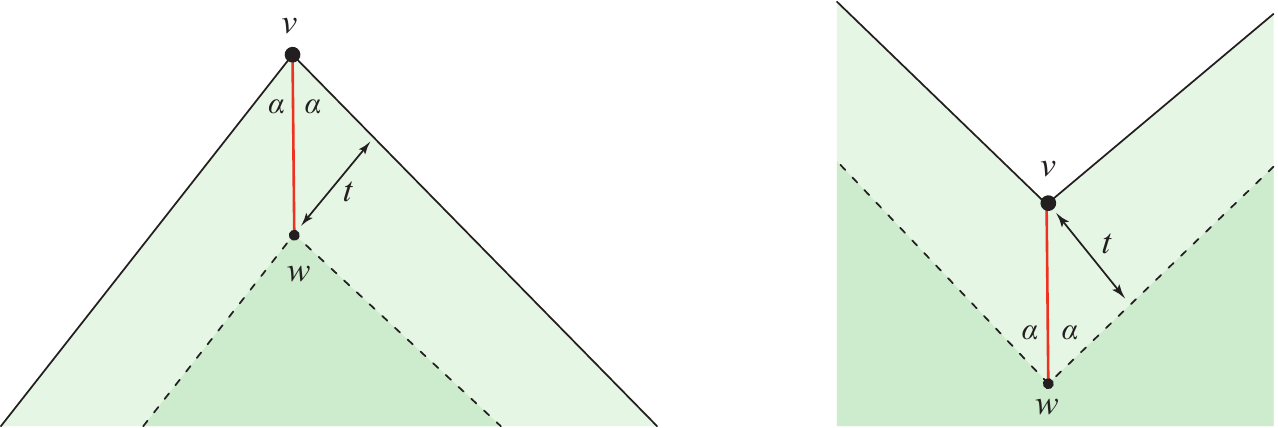}
\caption{Relationship between angle and speed.}
\label{f:angle}
\end{figure}

\begin{rem}
If $\alpha \leq \pi/2$ (convex), then $\alpha = \arcsin (1/\vel)$.  Otherwise $\alpha > \pi/2$ (reflex) and $\alpha = \pi - \arcsin(1/\vel)$. 
Since, a priori, we do not know whether the vertex angles are convex or reflex, we denote $\arcsin^*\theta = \{\arcsin \theta, \pi - \arcsin \theta\}$, to alleviate notational hassle.
\end{rem}

We now use the above observations to obtain the following:

\begin{prop} 
\label{p:star-rigid}
A star graph $G$ has a finite number of skeletal configurations.  In other words, there are a finite number of suitable polygons for $G$. 
\end{prop}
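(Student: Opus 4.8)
The plan is to use the velocity framework to collapse the space of planar embeddings of $S_n$ onto a single real parameter, and then to a one-variable analytic equation with finitely many roots. Fix $S_n$ with center $O$, leaves $v_1,\dots,v_n$, and the prescribed edge lengths $l(e_i)$; up to isometry an embedding of $S_n$ is recorded by the angles $\theta_i=\angle v_iOv_{i+1}$ (indices mod $n$), which are positive, sum to $2\pi$, and conversely determine the embedding. If $E(S_n)$ is a skeletal configuration, then since its straight skeleton is a star all polygonal vertices reach $O$ simultaneously, so \eqref{e:length-vel} gives $\vel_i=\bigl(l(e_i)/l(e_1)\bigr)\vel$ with $\vel:=\vel_1$ the only remaining degree of freedom. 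Writing $\eta_i$ for half the polygon angle at $v_i$, the fact that the skeleton edge $v_iO$ bisects that angle makes $\eta_i,\eta_{i+1}$ two of the angles of the triangle $Ov_iv_{i+1}$, so $\theta_i=\pi-\eta_i-\eta_{i+1}$; summing and using $\sum_i\theta_i=2\pi$ gives
\[
\sum_{i=1}^{n}\eta_i \ = \ \frac{(n-2)\pi}{2}.
\]
Since $\sin\eta_i=1/\vel_i$, the remark after Figure~\ref{f:angle} gives $\eta_i\in\arcsin^{*}\!\bigl(c_i/\vel\bigr)$ with $c_i:=l(e_1)/l(e_i)$, on the range $\vel>\vel_0:=\max_i c_i$ where this is defined. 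So for each of the at most $2^n$ ways of choosing which branch of $\arcsin^{*}$ is taken at each vertex — a ``combinatorial type'' recording which vertices are convex and which reflex — the displayed identity becomes a single equation $F(\vel)=\tfrac{(n-2)\pi}{2}$ in the one unknown $\vel\in(\vel_0,\infty)$, and $\vel$ together with the combinatorial type recovers all the $\eta_i$, hence all the $\theta_i$, hence the embedding. It therefore suffices to bound the number of solutions of each such equation.

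Here $F(\vel)=|I_r|\pi+\sum_{i\in I_c}\arcsin(c_i/\vel)-\sum_{i\in I_r}\arcsin(c_i/\vel)$, with $I_c,I_r$ indexing convex and reflex vertices, so $F$ is real-analytic on $(\vel_0,\infty)$; thus, provided $F\not\equiv\tfrac{(n-2)\pi}{2}$, its level set there is discrete, and one just has to keep solutions away from the two ends. Near $\vel=\infty$ one has $F(\vel)\to|I_r|\pi$, and in the borderline case $|I_r|\pi=\tfrac{(n-2)\pi}{2}$ the difference $F(\vel)-\tfrac{(n-2)\pi}{2}$ is real-analytic and not identically zero in $1/\vel$ near $0$, so in all cases $\vel$ is a priori bounded; near $\vel=\vel_0^{+}$, the substitution $t=\sqrt{1-(\vel_0/\vel)^2}$ turns $F$ into a real-analytic, non-constant function of $t$ near $t=0$ (the terms with $c_i=\vel_0$ become $\arccos t$), so solutions cannot accumulate there either. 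The one point needing real work — that $F$ and these auxiliary functions are not constant — follows because a constant $F$ would force $F'\equiv0$; clearing square roots and using the linear independence of $\vel\mapsto(\vel^2-c^2)^{-1/2}$ over distinct $c>0$, this forces $\{c_i:i\in I_c\}=\{c_i:i\in I_r\}$ as multisets, so $|I_r|=n/2$ and $F\equiv(n/2)\pi\neq\tfrac{(n-2)\pi}{2}$ — impossible. Hence each equation has finitely many solutions, and summing over the finitely many combinatorial types shows $S_n$ has finitely many skeletal configurations, i.e.\ finitely many suitable polygons.

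The main obstacle is this last step: the reduction to one parameter is bookkeeping with \eqref{e:length-vel} and the angle sum, but controlling the roots of the resulting one-variable analytic equation — ruling out that it holds identically, and that its roots escape to $\vel_0$ or $\infty$ — is the substantive part, and it rests on the linear independence of the functions $(\vel^2-c^2)^{-1/2}$ together with the observation that a degenerate combinatorial type yields the wrong constant $(n/2)\pi$.
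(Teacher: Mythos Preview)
Your argument is correct and follows the same overall strategy as the paper: use \eqref{e:length-vel} together with the interior-angle sum to collapse the problem to a single equation in one velocity variable (one equation per choice of convex/reflex branch at each leaf), and then show each such analytic equation has only finitely many solutions. The paper carries out that last step differently: working in $x=1/\vel_1$, it expands each $\arcsin(m_i x)$ as a Maclaurin series and argues, by tracking the constant term and then the coefficient dominated by the largest $m_i$, that the resulting analytic function cannot be identically zero. You instead differentiate $F$, note that $F'$ is a linear combination of the functions $(\vel^2-c^2)^{-1/2}$, and use their linear independence to characterize exactly when $F$ could be constant (forcing the convex and reflex multisets of $c_i$ to coincide, hence $|I_r|=n/2$ and $F\equiv (n/2)\pi\neq (n-2)\pi/2$). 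You also treat accumulation of solutions at both ends of the interval $(\vel_0,\infty)$ explicitly, via the change of variable $t=\sqrt{1-(\vel_0/\vel)^2}$ at the finite endpoint; the paper's Maclaurin argument lives near $x=0$ (i.e., $\vel=\infty$) and leaves the other endpoint implicit. The two approaches reach the same conclusion; yours is a bit more careful about the domain boundaries, while the paper's is more self-contained in that it needs only the series for $\arcsin$ rather than an auxiliary linear-independence fact.
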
 

\begin{proof}
The sum of the internal angles of a polygon with $n$ vertices is $(n-2) \pi$. So if the angle subtended at vertex $v_i$ for a suitable polygon of $G$ is $2 \alpha_i$, then 
$$\alpha_1 \ + \ \cdots \  + \ \alpha_n \ = \ \frac{(n-2) \cdot \pi}{2}.$$
\hide{
Equivalently, this can be written as one of few equations 
\[
\arcsin^* \frac{1}{v_1} + \arcsin^* \frac{1}{v_2} + \dots + \arcsin^* \frac{1}{v_n} = \frac{(n-2) \pi}{2} \,
\]
where $\arcsin^* s$ can be either $\arcsin s$ or $\pi - \arcsin s$. 

Indeed, the angles $\alpha_{i}$ in our polygon are in the range $[0,\pi]$. Furthermore, if $\alpha_{i} \leq \frac{\pi}{2}$, then $\frac{1}{v_i} = \sin \alpha_i$ and $\alpha_i = \arcsin \frac{1}{v_i}$. If, on the other hand, $\alpha_{i} > \frac{\pi}{2}$, $\frac{1}{v_i} = \sin (\pi - \alpha_i)$ and $\alpha_i = \pi -\arcsin \frac{1}{v_i}$. Notice that a priori we don't know whether the angles at the vertices are reflex or vertex - and of course, some of these systems might not give rise to solutions. We will, however, prove, each possible choice for $\arcsin^*$ leads to at most a finite number of solutions - which is all we need. 
}
In light of Eq.~\eqref{e:length-vel}, this can be rewritten as 
\begin{equation}
\label{e:arcsin}
\arcsin^* \left(\frac{1}{v_1}\right) + \arcsin^* \left(\frac{l(e_1)}{l(e_2) \cdot v_1}\right) + \cdots  + \arcsin^* \left(\frac{l(e_1)}{l(e_n) \cdot v_1}\right) \ = \ \frac{(n-2)\pi}{2}.
\end{equation}
Since the lengths of the star edges are fixed, this can be viewed as an equation in $1/\vel_1$.  We show this equation, for each possible choice for $\arcsin^*$, leads to at most a finite number of solutions.  

First assume that once Eq.~\eqref{e:arcsin} is rewritten as
$$\phi \ := \ \arcsin (m_1 x) \ \pm \ \arcsin (m_2 x) \ \pm \ \dots \ \pm \ \arcsin (m_n x) \ = \ c,$$
the constant term $c$ is nonzero. Since the Maclaurin series expansion for $\arcsin$ is 
\[
\arcsin z \ = \ \sum_{k=0}^\infty \ \frac{(2k)!}{2^{2k} \ (k!)^2 \ (2k+1)}   \ z^{2k+1} \, ,
\]
then $\phi$ also has an infinite series expansion when all of the $m_i x$ terms are between -1 and 1. Hence, if $\phi - c$ has an infinite number of solutions on a bounded interval for $x$, it must be identically zero. But notice that $\arcsin (m_i x)$ has no constant term in the infinite series expansion. Because $c$ is nonzero, $\phi - c$ cannot be identically zero, a contradiction.  Thus there are at most a finite number of solutions in this case.

Now assume the constant term $c$ vanishes.  One can show that at least one $\arcsin (m_i x)$ term will remain in Eq.~\eqref{e:arcsin}.  Out of all such terms, consider the one with the maximum $m_i$ value, say $\widehat m$.  Looking at the Maclaurin series expansion again, for sufficiently large $k$, the coefficient in front of $z^{2k+1}$ will be dominated by 
\[
\frac{(2k)! \ \widehat m^{2k+1}}{2^{2k} \ (k!)^2 \ (2k+1)}
\]
since $\widehat m$ must be positive and strictly larger than all other $m_i$ values.  Thus, for sufficiently large $k$, the coefficient in front of $z^{2k+1}$ will not vanish, and therefore $\phi - c$ cannot be identically zero, a contradiction. 
\end{proof}

\begin{rem}
In \cite[Lemma 8]{euro}, every caterpillar graph is shown to have a finite number of skeletal configurations, using different proof techniques.
\end{rem}

%
%
\section{Racing Lemma} \label{s:race}

Let $P$ be a convex polygon, and $P(t)$ denote the polygon formed by the edges of $P$ moving inwards at unit speed at time $t$. Suppose that there are a total of $n$ events which occur at times $t_1 < \ldots < t_{n}$. Observe that $P(t_1), \ldots,  P(t_{n-1})$ forms a sequence of polygons with a strictly decreasing number of vertices, resulting in the point $P(t_n)$.  We call $P(t_n)$ the \emph{chronological center} of polygon $P$. Figure~\ref{f:cc}(a) shows an example, where the direction on the edges of $\str(P)$ is based on how the skeleton is constructed.  The unique sink of this directed graph is the chronological center.  We omit the proof of the following:

\begin{lem} 
\label{l:chrcenter}
Let $P$ be a convex polygon.  If $P$ is in general position, then $\cc(P)$ is a vertex of $\str(P)$; otherwise, $\cc(P)$ is either a vertex or an edge of $\str(P)$.
\end{lem}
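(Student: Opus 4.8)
The plan is to track the straight-skeleton shrinking process and show that the chronological center is forced to be a single vertex (a point equidistant from at least three sides) when $P$ is generic, and can degenerate to an edge only under a non-generic coincidence. First I would make precise the ``general position'' hypothesis: it should mean that the event times $t_1 < \cdots < t_n$ are all distinct, and in particular that the final event is a single shrink event in which a triangle collapses to a point. Recall from Section~\ref{s:pre} that for a convex polygon only shrink events occur, so $P(t_1), \ldots, P(t_{n-1})$ is a strictly decreasing sequence of convex polygons terminating at the point $P(t_n) = \cc(P)$. The key observation is that $P(t_{n-1})$ is a triangle: it is a convex polygon with at least three vertices that collapses to a point at the next (and last) event, and if it had four or more vertices, shrinking it further parallel-offset would produce at least one more intermediate polygon before reaching a point, contradicting that $t_n$ is the last event.

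Next I would argue that when $P(t_{n-1})$ is a triangle collapsing to a point, that point $\cc(P)$ is the incenter of the triangle, hence equidistant from its three sides, and these three sides are offsets of three distinct edges of the original polygon $P$ (tracked through the earlier shrink events). Therefore $\cc(P)$ lies on the bisector rays emanating from three directions and is a point of $\str(P)$ equidistant to three or more sides --- i.e., an internal node, hence a vertex of $\str(P)$. This handles the general-position case.

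For the non-generic case, I would examine how the dichotomy ``vertex or edge'' arises: the only way $\cc(P)$ fails to be a single vertex is if two events coincide at the final time. If the last event is a simultaneous collapse of two edges of $P(t_{n-1})$ (so $P(t_{n-1})$ is a triangle in which two of the three inward-moving sides reach a point simultaneously but the third does not), or more precisely if $P(t_{n-1})$ shrinks in such a way that an entire edge of the skeleton --- traced by a vertex that is created and then immediately annihilated --- is swept out, then the ``center'' is a segment of $\str(P)$ rather than a point. I would show these are the only possibilities: any convex polygon, offset inward, eventually degenerates either to a point or to a segment (the medial-axis picture: the last surviving feature is a single most-interior point, or a ``ridge'' segment when two families of bisectors stay parallel-equidistant), and in the polygonal-skeleton language the segment is precisely an edge of $\str(P)$.

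The main obstacle I expect is the careful bookkeeping in the degenerate case: making rigorous that the only degeneration beyond ``single vertex'' is ``single edge,'' and in particular ruling out that the chronological center could be a more complicated subtree (two edges meeting at a vertex, say). The cleanest route is probably to use the medial-axis characterization available for convex polygons (the straight skeleton equals the medial axis here): the medial axis of a convex region has a well-defined ``innermost'' locus under the grassfire/offset flow, which is the set of points attaining the maximal inradius, and for a convex polygon this maximal-inradius locus is either a single point or a single segment (it is the intersection of finitely many half-planes and cannot be two-dimensional since the polygon is strictly offset, nor can it branch, being convex). Translating this cleanly into the event-chronology framework, and checking that ``maximal inradius locus'' coincides with ``unique sink of the directed skeleton,'' is where the real care is needed; everything else is routine tracking of offsets.
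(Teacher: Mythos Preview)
The paper explicitly omits the proof of this lemma (see the sentence immediately preceding it: ``We omit the proof of the following''), so there is nothing to compare against; I can only assess your argument on its own.

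Your overall strategy is sound, and the cleanest part is the last paragraph: for a convex polygon the offset region $\{x : d(x,\partial P) \geq t\}$ is an intersection of half-planes, hence convex; at the supremal $t$ it has empty interior (else one could offset further), so it is a point or a segment. That single observation already proves the full lemma once one checks that this limit set coincides with $\cc(P)$ and lies in $\str(P)$, which is routine. The general-position case (the penultimate polygon is a triangle, which collapses to its incenter, giving a degree-$\geq 3$ node of the skeleton) is also correctly argued.

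The one place your write-up goes astray is the middle paragraph on degeneracy. A triangle \emph{always} collapses to a point under parallel inward offset --- its three sides meet simultaneously at the incenter --- so the phrase ``a triangle in which two of the three inward-moving sides reach a point simultaneously but the third does not'' describes an impossible scenario. The edge case of the lemma does not come from a triangle at all: it arises when $P(t_{n-1})$ has \emph{four or more} sides and collapses directly to a segment, which requires two opposite edges to be parallel (as in a non-square rectangle, where the two short edges vanish simultaneously and the two long edges coincide along the central segment). This is exactly the coincidence your ``general position'' hypothesis should exclude. Once you correct this description, the event-chronology argument and the maximal-inradius argument agree, and your proof goes through.
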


\begin{figure}[h]
\includegraphics[width=\textwidth]{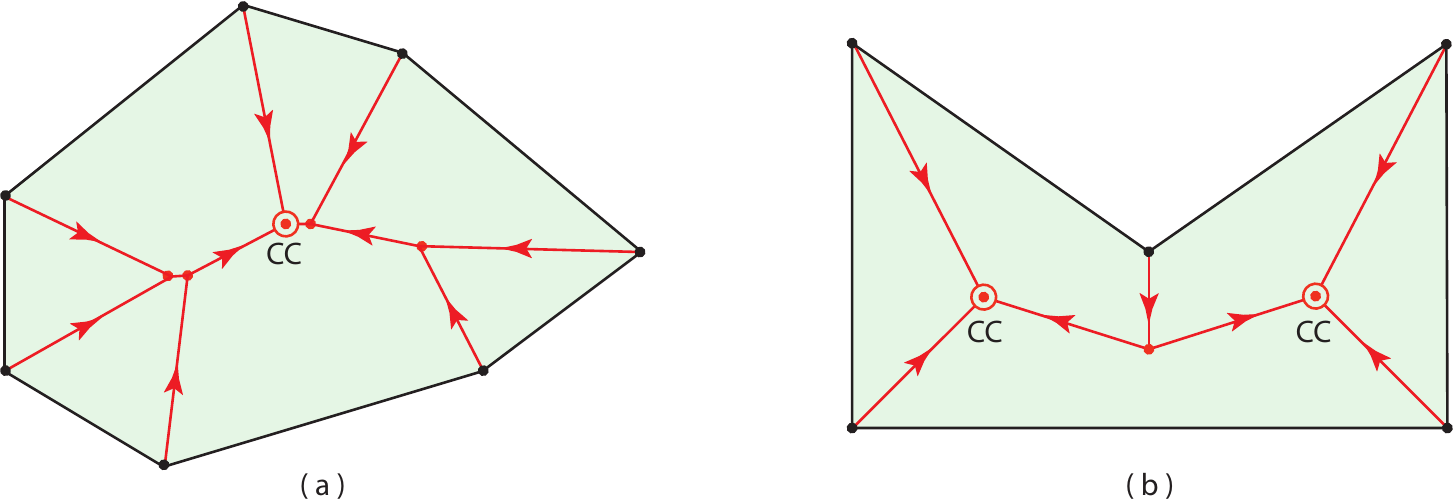}
\caption{Chronological centers of polygons.}
\label{f:cc}
\end{figure}

\begin{rem}
It is worth noting that the chronological center can be generalized for nonconvex polygons as well:  Here, multiple sinks will appear, one for each polygon that shrank to a point during an event, such as in Figure~\ref{f:cc}(b).
\end{rem}

Our key lemma is an analog of Cauchy's Arm Lemma in the theory of polyhedra reconstruction: The arm lemma states that if we increase one of the angles of a convex polygonal chain, the distance between the endpoints will only increase. To parallel this, 
we show that increasing one of the velocities of the leaves causes all velocities in the tree to be increased. More specifically: 

\begin{race} 
Let $G \in \G$ be the skeletal configuration of a suitable \emph{convex} polygon $P$.  If we increase the velocity of one of the leaves a sufficiently small amount, the velocities of all nodes of $G$ (other than chronological center) must increase in order for $G$ to remain a skeletal configuration.
\end{race} 

\begin{proof}
Root $G$ at its chronological center $\cc(G)$, which we assume to be a vertex of $G$.\footnote{The case when $\cc(G)$ is not a vertex is considered later.}
Notice that this uniquely determines the order in which the shrink events occur in the tree. We  prove a slightly stronger claim, namely that for any subtree with a root distinct from the chronological center, increasing the velocity of one of the leaves in the subtree an arbitrarily small amount forces the velocities of all nodes in the subtree to increase. 

Proceed by induction on the maximum depth of the subtree, where the depth is the defined as the maximum topological length of a path from the root to a leaf in the subtree.  If the maximum depth of the subtree is one, we have a group of leaves with a common parent. Let the leaves have velocities $\vel_1, \ldots, \vel_k$. If the lengths of the edges from the parent to the leafs are $l_1, \ldots, l_k$, correspondingly, then $l_i/\vel_i = l_j/\vel_j.$   Thus, if the velocity of one leaf increases, so must all others.  

Now consider a subtree of depth $k$ with root $O$, and let the children of $O$ be $O_1, \ldots O_m$.  
Since $G$ is a skeletal configuration of $P$, the subtree of $O$ corresponds to $P$ being ``chiseled out'' by two supporting lines $AB$ and $AC$ as shown in Figure~\ref{f:ind-2}. 
\begin{figure}[h]
\includegraphics[width=.7\textwidth]{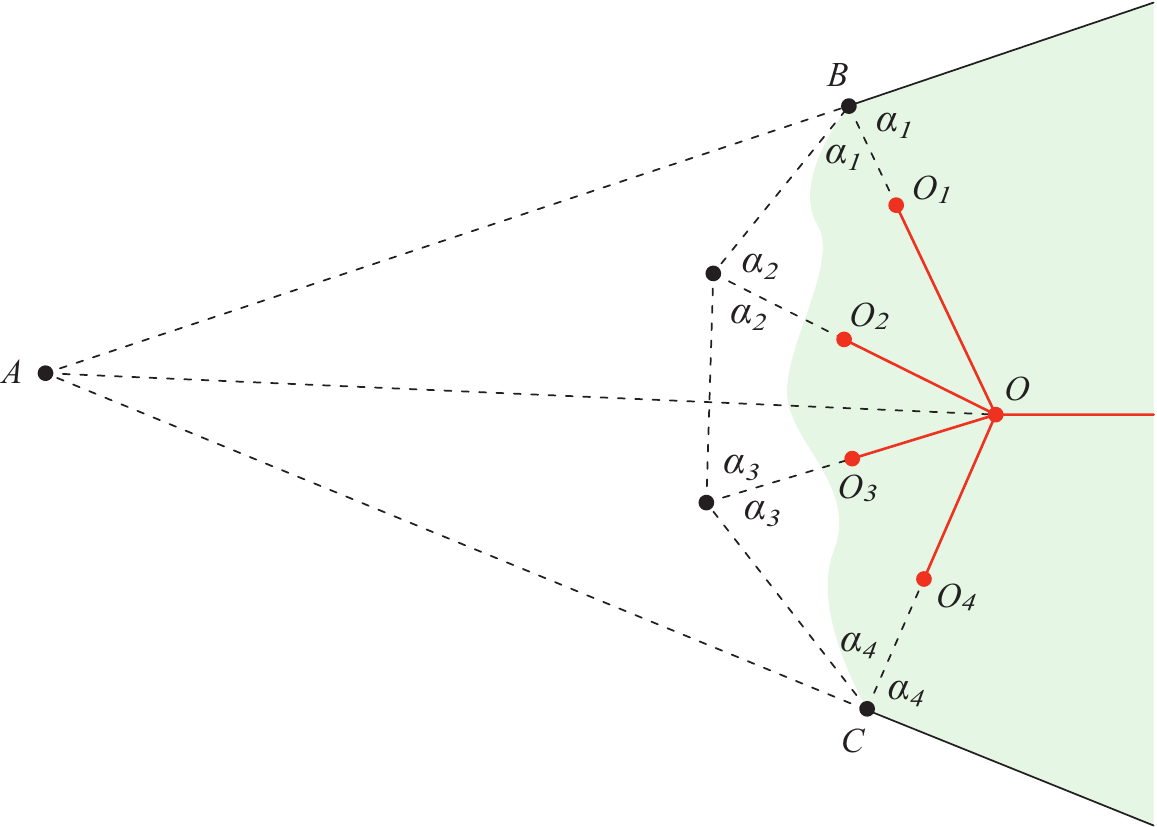}
\caption{Tree to polygon perspective for the Racing Lemma.}
\label{f:ind-2}
\end{figure}
Here, the sequence of edges lying between the edges of the polygon corresponding to lines $AB$ and $AC$ will have shrunk before the occurrence at node $O$, 
where the lines of $OO_i$ are angle bisectors of some (possibly non-adjacent) edges of $P$. 
Since $AO$ is an angle bisector of the angle between lines $AB$ and $AC$, arithmetic shows that 
$$\psi \ := \ \frac{\pi (m-1) - (\pi - 2 \alpha_1) - (2\pi -2\alpha_2) - \cdots - (2\pi - 2\alpha_{m-1}) - (\pi - 2 \alpha_m)}{2}$$ 
equals $\angle BAO$ and $\angle CAO$.  Now if $O_i$ has velocity $(\sin \alpha_i)^{-1}$, then $O$ has velocity $\arcsin \psi$.  Since the polygon is convex, however,  $\alpha_i$ and $\psi$ are convex angles, where $\sin$ is monotonically increasing. Therefore, if the velocities $(\sin \alpha_i)^{-1}$ of nodes $O_i$ increase, so does the velocity of $O$. 

Assume we increase the velocity of a leaf in the subtree of $O_1$. Then, by the inductive hypothesis, all of the vertices in $O_1$'s subtree (including $O_1$) will increase in velocity, so that $O_1$ finishes tracing out edge $O_1O$ faster than before.  For any other child $O_i$ of $O$, since the edge $O_iO$ is traced out at the same time as $O_1O$, the velocity of $O$ also increases.
\end{proof}

\begin{rem}
This lemma can be strengthened to include convex polygons in \emph{degenerate} positions, where from Lemma~\ref{l:chrcenter}, the chronological center can be an edge, with endpoints $T_1$ and $T_2$.  If we increase the velocity of a leaf in tree $T_1$, all of the vertices of the nodes of $T_1$ will increase.  But since both endpoints will be reached at the same time, it follows that all nodes in $T_2$'s subtree must increase their velocities.
\end{rem}

%
%
\section{Convex Rigidity} \label{s:rigid}

One more lemma is needed in order to prove the main rigidity result:

\begin{lem} 
Let $G \in \G$, with a fixed chronological center, and an assignment of velocities to each leaf.  Then there is at most one suitable \emph{convex} polygon.
\label{l:maxone}
\end{lem}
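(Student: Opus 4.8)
The plan is to reconstruct the polygon from the combinatorial/velocity data in a top-down fashion starting from the chronological center, showing at every stage that the next piece of geometric data is forced. First I would root $G$ at its chronological center $\cc(G)$; as in the Racing Lemma this fixes the order in which shrink events occur, hence the directed structure of the skeleton. The velocities on the leaves, together with the recursion from the Racing Lemma proof — namely that a node $O$ with children $O_1,\dots,O_m$ of convex half-angles $\alpha_i$ has half-angle $\psi = \tfrac12\big(\pi(m-1) - (\pi-2\alpha_1) - (2\pi-2\alpha_2) - \cdots - (2\pi-2\alpha_m)\big)$ and velocity $(\sin\psi)^{-1}$ — propagate uniquely up the tree, because the polygon is convex and so every $\alpha_i$ is the \emph{convex} branch of $\arcsin^*$; there is no ambiguity in which branch to take. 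So the velocity of every node of $G$, and the interior half-angle $\alpha_v$ at every polygon vertex $v$, is determined by the given data.

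Next I would show these angles, together with the tree combinatorics, pin down the polygon up to rigid motion. The key is that knowing all vertex half-angles of a convex polygon together with the event chronology determines the polygon up to similarity, and then the actual lengths fix the scale. Concretely, run the shrinking process backwards: at the chronological center the polygon is a point (or, in the degenerate case flagged in the remark after Lemma~\ref{l:chrcenter}, a segment, which I would handle as a separate easy case). Reversing the last shrink event re-creates a triangle whose angles are the $2\alpha_v$ at its three vertices and whose size is fixed by requiring the relevant skeleton edge to have its prescribed length $l(e)$ (equivalently, by the time $l(e)/\vel$ the vertex takes to traverse it). Each earlier reversed event inserts one new vertex on an existing edge of the current convex polygon; the location and the new edge direction are determined because the new vertex must have the known half-angle $\alpha$ and must sit at the correct distance, dictated by the corresponding edge length divided by the (already determined) velocity. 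Since the directions of the two edges meeting at the re-inserted vertex are forced, the whole polygon is rebuilt with no free parameters; consistency (closure of the polygon) is automatic because $G$ is assumed to be the skeletal configuration of \emph{some} suitable convex polygon, so a solution exists and our construction must produce it.

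The main obstacle I expect is bookkeeping the reverse event process cleanly: one must argue that at each reversed shrink event the new edge is attached to a uniquely determined edge of the current polygon (the combinatorics of the rooted tree tells us which two formerly non-adjacent sides were made adjacent), and that the angle and length data leave exactly zero degrees of freedom — in particular that convexity is what kills the $\arcsin$-branch ambiguity at every step, so the inductive reconstruction is single-valued. A secondary point to be careful about is the degenerate case where $\cc(G)$ is an edge rather than a vertex: here the process starts from a segment traced out by two subtrees $T_1$ and $T_2$ meeting along it, and one checks that fixing the leaf velocities determines each half independently and the common edge length glues them rigidly, so uniqueness still holds. I would present the argument as: (1) reduce to a rooted tree and determine all node velocities and vertex angles; (2) reconstruct the polygon by reversing events, noting each step is forced; (3) dispose of the degenerate chronological-center case.
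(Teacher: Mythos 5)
Your argument is correct in outline, but it runs in the opposite direction from the paper's. The paper proves uniqueness by induction on the number of edges of $G$, moving \emph{forward} in time: it takes two hypothetical suitable convex polygons $P$ and $Q$ with the same leaf velocities (hence the same vertex angles, since convexity forces the convex branch of $\arcsin^*$), isolates the leaves $A_1,\dots,A_m$ involved in the first shrink event, shows that the ``caps'' $AA_1\cdots A_m$ cut off by the two neighboring edge lines are congruent in $P$ and $Q$, replaces each cap by the single vertex $A$, and applies the induction hypothesis to the resulting smaller polygons $P'$ and $Q'$, the base case being the triangle of Proposition~\ref{p:triangleunique}. You instead reconstruct the polygon \emph{backward} in time from the chronological center, reversing shrink events one at a time and arguing that each reversed event is forced by the angle data, the event times, and the rooted tree combinatorics. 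Both proofs rest on the same two facts --- rooting at the chronological center fixes the event order, and the leaf velocities plus convexity determine every vertex angle --- so neither is more general, but the trade-off is real: the paper's comparison argument never has to track event times or verify that a reconstruction closes up (it only transfers congruence between two polygons assumed to exist), whereas yours is effectively an algorithm for building the polygon and therefore needs the extra bookkeeping you flag, including the degenerate case of Lemma~\ref{l:chrcenter} where the center is an edge (which the paper handles only in a remark after the Racing Lemma). If you write your version up, the one step to make fully precise is the reversal of a shrink event at a skeleton node $v$: it replaces one vertex of the current polygon by a chain of new edges (not ``a new vertex on an existing edge''), whose directions are fixed by the prescribed angles and whose supporting lines at time $t$ must be tangent to the circle of radius $t_v - t$ about $v$, where $t_v$ is the event time computed from the edge lengths and velocities; that is what leaves zero degrees of freedom at each step.
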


\begin{proof}
This is based on induction on the number of edges of tree $G$. The claim is true for any tree with three edges by Proposition~\ref{p:triangleunique}, so the base case is covered. Now consider a phylogenetic tree with $k$ edges.   We will use the relationship between the angles subtended at the vertices and their speed. 
Assume there are two distinct, convex polygons $P$ and $Q$ with the same angles (corresponding to velocity assignments), both with $G$ as their skeleton, with identical chronological centers.  Label the vertices of $G$ the same for both $P$ and $Q$, but bear in mind $G$ has different embedding for the two polygons.

Let the first event be the simultaneous shrinking of edges $A_1 A_2$, \ldots, $A_{m-1} A_m$. Since the velocities of the leafs are equal, and the chronological center is the same, this same event must happen first for both polygons. (Remember, rooting the tree in the chronological center uniquely determines the event sequence.) Let $O$ be the parent of leaves $A_1, \ldots, A_m$ in $G$.  Notice that all of the triangles $OA_iA_{i+1}$ must be congruent in $P$ and $Q$.

\begin{figure}[h]
\includegraphics[width=.7\textwidth]{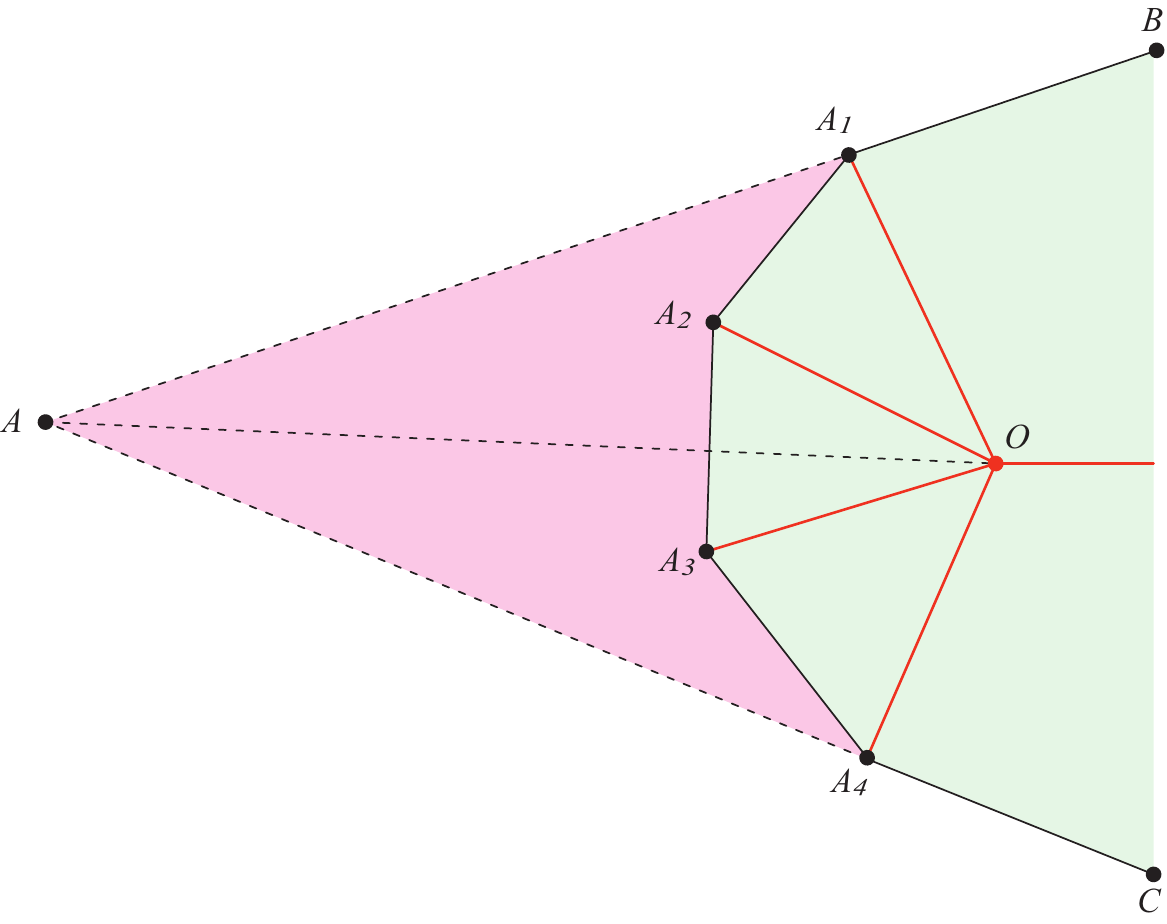}
\caption{Construction of congruent polygons.}
\label{f:fixedspeed}
\end{figure} 

Let $B$ and $C$ be vertices of the polygons adjacent to $A_1$ and $A_m$, respectively.  Let the intersection of lines $A_1 B$ and $A_m C$ be at point $A$; see Figure~\ref{f:fixedspeed}.   Consider the polygons in $P$ and $Q$ with vertices $AA_1 \ldots A_m$ (shaded red in Figure~\ref{f:fixedspeed}): the edges $A_iA_{i+1}$ are all equal, as are the angles subtended at vertices $A_i$ by the congruence of the triangles $OA_iA_{i+1}$.  So these polygons must be congruent in $P$ and $Q$, implying the angle $A_1AA_m$ and the length $OA$ be identical in both $P$ and $Q$ as well.  Deleting all $A_i$ vertices and adding vertex $A$ creates new convex polygons $P'$ and $Q'$ with new sides $BA$ and $AC$, both with identical angles at all leaves, with the same underlying skeleton.  By the induction hypothesis, they are congruent polygons, implying that $P$ and $Q$ cannot be distinct.
\end{proof}

\begin{thm} 
\label{t:rigidity_convex}
For $G \in \G$ with $n$ leaves, there are at most $2n-5$ suitable convex polygons.
\end{thm}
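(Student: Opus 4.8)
The plan is to count the suitable convex polygons of $G$ according to the position of their chronological center, show that each position accounts for at most one polygon, and check that there are at most $2n-5$ possible positions.

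First I would pin down where a chronological center can sit. Let $P$ be a suitable convex polygon, so $\str(P) = E(G)$, and look at $\cc(P)$. By Lemma~\ref{l:chrcenter} it is a vertex or an edge of $G$, and it is not a leaf, because the leaves of $E(G)$ are the vertices of $P$ while the shrinking polygon collapses in the interior of $P$. If $\cc(P)$ is an edge, then, in the degenerate picture described after the Racing Lemma, its two endpoints are the roots of the two subtrees into which sub-polygons of $P$ collapse; since a sub-polygon has at least three sides, each endpoint meets at least three edges of $G$ and is therefore internal. So $\cc(P)$ is always an internal vertex or an internal edge of $G$. Because every internal vertex of $G$ has degree $\ge 3$, there are at most $n-2$ internal vertices, and since the internal vertices span a subtree there are at most $n-3$ internal edges; hence at most $2n-5$ possible positions. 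Put differently, $P$ determines a rooting of $G$ --- at the vertex $\cc(P)$, or, in the sense of the degenerate Racing Lemma, at the internal edge $\cc(P)$ --- and there are at most $2n-5$ such rootings.

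Next comes the heart of the argument: a given rooting of $G$ is realized by at most one suitable convex polygon. A rooting fixes the chronological order of all shrink events. For each internal node of the rooted tree, the requirement that its children finish tracing their incident edges at the same instant --- the general form of Eq.~\eqref{e:length-vel}, read through the dictionary $\vel = (\sin\alpha)^{-1}$ --- cuts the possible leaf-velocity assignments down to a one-parameter family, which I would parametrize by a single leaf velocity $\vel_1$. By the Racing Lemma, increasing $\vel_1$ increases every node velocity, so along this family all the leaf velocities $\vel_i$ increase together; hence every convex half-angle $\alpha_i = \arcsin(1/\vel_i)$ decreases, and the interior-angle sum $2(\alpha_1 + \cdots + \alpha_n)$ is strictly monotone in $\vel_1$. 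A suitable polygon on $n$ vertices forces this sum to equal $(n-2)\pi$, which pins down $\vel_1$, hence the entire velocity assignment. By Lemma~\ref{l:maxone}, at most one suitable convex polygon has this chronological center and this assignment. Summing over the at most $2n-5$ rootings yields the theorem.

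The step I expect to be the main obstacle is making the one-parameter-family argument rigorous. The Racing Lemma is an infinitesimal statement, so one must verify that as $\vel_1$ varies the configuration stays convex and keeps the same event chronology --- so the lemma keeps applying --- and that the family is connected; if it split into several branches one would obtain one solution per branch and would need to bound the number of branches. A smaller point to settle is the edge-center case: the velocity relations and the ``collapse from both ends'' along the central edge have to be set up carefully along the lines of the degenerate remark following the Racing Lemma.
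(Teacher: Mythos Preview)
Your proposal is correct and follows the same architecture as the paper: count the $2n-5$ possible chronological centers (internal vertices plus internal edges), use the Racing Lemma together with the interior-angle-sum constraint to show each center admits at most one leaf-velocity assignment, and then invoke Lemma~\ref{l:maxone}. The paper sidesteps your worry about a one-parameter family by arguing by direct comparison instead: if two suitable convex polygons share the same center with, say, $\vel_1' > \vel_1$, the Racing Lemma forces $\vel_i' \ge \vel_i$ for every $i$, so $\sum_i \arcsin(1/\vel_i') < \sum_i \arcsin(1/\vel_i)$, contradicting that both sums equal $(n-2)\pi/2$ --- no continuous deformation, connectedness, or branch-counting is needed.
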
 

\begin{proof}
It is sufficient to prove that for each possible choice of a chronological center for $G$, there exists at most one suitable convex polygon for $G$.  Since $G$ has $n-3$ interior edges and $n-2$ interior vertices, there are $2n-5$ possible chronological center choices.
Fix the chronological center in an edge or vertex of $G$, and let the vertices $v_i$ of $P$ move with velocities $\vel_i$. By Lemma~\ref{l:maxone}, there is at most one suitable convex polygon $P$.  

The angle $\alpha_i$ of the convex polygon subtended at $v_i$ is $2 \arcsin(1/\vel_i)$, resulting in
$$\sum_{i=1}^{n} \ \arcsin \left(\frac{1}{\vel_i}\right) \ = \ \frac{(n-2) \pi}{2} \, .$$
For contradiction, assume there is another set of velocities $\vel_i'$ resulting in a valid skeletal configuration with the same chronological center. Without loss of generality, assume $\vel_1' > \vel_1$.  Then by the Racing Lemma, $\vel_i' \geq \vel_i$ for all $i$, implying
$$\frac{(n-2) \pi}{2} \ = \ \sum_{i=1}^{n} \ \arcsin \left(\frac{1}{\vel_i'}\right) \ < \ \sum_{i=1}^{n} \ \arcsin \left(\frac{1}{\vel_i}\right) \ = \ \frac{(n-2) \pi}{2} \, ,$$
a contradiction.  
\end{proof}

%
%
\section{Computational issues} \label{s:comp}
\subsection{}

While the previous sections were focused on the rigidity of the skeletal configurations, nothing was said about how one would go about constructing a valid skeletal configuration when provided with a phylogenetic tree.  We begin with an algebraic proof of the rigidity of trivalent stars of Proposition~\ref{p:star-rigid}, helping pave the way for some discussion on the computational issues involved in the problem.

\begin{prop} 
\label{p:triangleunique}
There is a unique suitable triangle for every star $S_3$ of degree three.
\end{prop}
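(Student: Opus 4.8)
The plan is to recover the triangle directly from the three edge lengths $l(e_1), l(e_2), l(e_3)$ of $S_3$ by solving for the half-angles at the three leaves, and to show the resulting system has a unique solution compatible with being a triangle. Write the half-angle at leaf $v_i$ as $\alpha_i$, so the vertex angle of the polygon is $2\alpha_i$; since the polygon is a triangle its angle sum gives $\alpha_1+\alpha_2+\alpha_3 = \pi/2$. The velocity relation \eqref{e:length-vel} says $l(e_i)/\vel_i$ is constant, and the angle-speed relation from Section~\ref{s:velocity} gives $\vel_i = (\sin\alpha_i)^{-1}$ (all angles of a triangle are convex, so there is no $\arcsin^*$ ambiguity here — this is the one case where the starred arcsine collapses to an ordinary arcsine). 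Hence $l(e_i)\sin\alpha_i$ is independent of $i$, i.e. $\sin\alpha_2 = (l(e_1)/l(e_2))\sin\alpha_1$ and $\sin\alpha_3 = (l(e_1)/l(e_3))\sin\alpha_1$. Substituting into $\alpha_1+\alpha_2+\alpha_3=\pi/2$ yields a single equation
$$
\alpha_1 + \arcsin\!\left(\frac{l(e_1)}{l(e_2)}\sin\alpha_1\right) + \arcsin\!\left(\frac{l(e_1)}{l(e_3)}\sin\alpha_1\right) \ = \ \frac{\pi}{2}
$$
in the one unknown $\alpha_1 \in (0,\pi/2)$.

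Next I would argue this equation has exactly one solution. The left-hand side is a strictly increasing continuous function of $\alpha_1$ on the relevant interval: each of the three summands is strictly increasing in $\alpha_1$ (for $\arcsin$, because $\sin$ is increasing on $[0,\pi/2]$ and $\arcsin$ is increasing, and the inner arguments stay in $[0,1]$ on the feasible range). As $\alpha_1\to 0^+$ the left side tends to $0 < \pi/2$, and at the upper end of the feasible range it exceeds $\pi/2$ (either $\alpha_1$ itself approaches a value making the sum large, or one of the arcsine arguments hits $1$ while the sum is already $\ge \pi/2$ — a short monotonicity-plus-continuity check pins this down). By the intermediate value theorem there is a solution, and by strict monotonicity it is unique. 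Having fixed $\alpha_1$, the values $\alpha_2,\alpha_3$ are determined, hence so are all three vertex angles $2\alpha_i$; together with, say, normalizing the position and orientation, the three angle-bisector lines (which carry the prescribed edge lengths to the common center $O$) are determined, and therefore the triangle is unique up to congruence. One should also note existence is not actually being claimed here — Proposition~\ref{p:top-feasible} or direct inspection handles feasibility; the content is the ``at most one'' statement — though the IVT argument gives existence for free when the system is consistent.

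The main obstacle I anticipate is the endpoint/feasibility bookkeeping: one must be careful about the domain of $\alpha_1$, since $\frac{l(e_1)}{l(e_2)}\sin\alpha_1$ and $\frac{l(e_1)}{l(e_3)}\sin\alpha_1$ must both lie in $[0,1]$ for the arcsines to be defined, so the feasible interval for $\alpha_1$ is $(0,\ \min(\pi/2,\ \arcsin(l(e_2)/l(e_1)),\ \arcsin(l(e_3)/l(e_1))))$, and one needs that on the closure of this interval the left-hand side of the displayed equation actually crosses $\pi/2$. This is a routine but slightly fiddly case analysis on which of the three bounds is the binding one. Everything else — monotonicity, reducing to one variable, and concluding uniqueness of the triangle from uniqueness of the angles — is immediate. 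Alternatively, and perhaps more cleanly for the write-up, one can phrase the whole argument as an application of Lemma~\ref{l:maxone}'s proof technique in the base case, or invoke the same Maclaurin-series argument used in Proposition~\ref{p:star-rigid} to rule out infinitely many solutions and then upgrade ``finite'' to ``one'' via the monotonicity observation; I would present the monotonicity version since it directly yields the sharp count.
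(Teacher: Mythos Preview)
Your argument is correct and the endpoint bookkeeping you flag is indeed routine: without loss of generality one may take $l(e_1)$ to be any of the three lengths, and at the right end of the feasible interval for $\alpha_1$ at least one summand on the left-hand side equals $\pi/2$ while the remaining two are strictly positive, so the sum exceeds $\pi/2$ and the intermediate value theorem applies.

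Your route is genuinely different from the paper's. The paper proceeds algebraically: using the law of sines in triangles $AOC$ and $AOB$ together with $\gamma = \pi/2 - \alpha - \beta$, it eliminates $\beta$ and $\gamma$ to obtain an explicit cubic in $x = \sin\alpha$,
\[
(2\,AO^2\cdot BO\cdot CO)\,x^3 + (AO^2\cdot CO^2 + AO^2\cdot BO^2 + BO^2\cdot CO^2)\,x^2 - CO^2\cdot BO^2 \ = \ 0,
\]
and then shows via an AM--GM bound on the discriminant and Vieta's relations that this cubic has exactly one positive real root, which moreover lies in $(0,1]$. Your approach instead keeps the transcendental equation and exploits strict monotonicity of the arcsine sum, which is cleaner for the bare uniqueness claim and is philosophically aligned with the Racing Lemma and the proof of Theorem~\ref{t:rigidity_convex}. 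What you lose is the explicit cubic itself: the paper reuses Eq.~\eqref{eq:triangle} immediately afterward to prove that the suitable triangle for $S_3$ with edge lengths $r,2r,3r$ is not constructible by ruler and compass, and that argument needs the polynomial in hand. If you present the monotonicity proof, you would still need to derive the cubic separately for that application.
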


\begin{proof}
Assume an embedding of a star $S_3$ is given which is the straight skeleton of the triangle determined by the leaves.  Let the angles of the triangles be $2\alpha, 2\beta$, and $2\gamma$ at vertices $A, B$, and $C$ respectively, as in Figure~\ref{f:tri-alg}. 

\begin{figure}[h]
\includegraphics{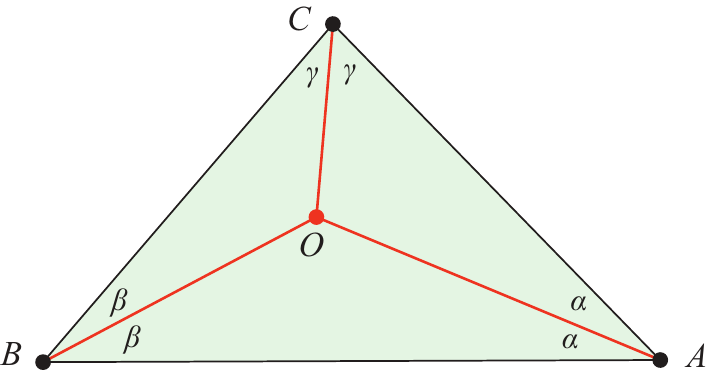}
\caption{Suitable triangle of a trivalent star.}
\label{f:tri-alg}
\end{figure}

The law of sines applied to triangle $AOC$ yields 
$$\frac{AO}{\sin \gamma} = \frac{CO}{\sin \alpha}.$$
Since $\gamma = \frac{\pi}{2} - \alpha - \beta$, it follows that 
$$\frac{AO}{\sin \gamma} = \frac{AO}{\cos(\alpha+\beta)} = \frac{AO}{\cos\alpha \cos\beta - \sin\alpha \sin\beta} = \frac{AO}{\sqrt{1-{\sin\beta}^2}\sqrt{1-{\sin\alpha}^2} - \sin\alpha \sin\beta}.$$ 
Thus 
$$\frac{AO}{\sqrt{1-{\sin\beta}^2}\sqrt{1-{\sin\alpha}^2} - \sin\alpha \sin\beta} =  \frac{CO}{\sin \alpha}.$$
The law of sines applied to triangle $AOB$ results in
$$\frac{AO}{\sin \beta} \ = \ \frac{BO}{\sin \alpha}\, , $$
which together with the previous equation produces the following polynomial:
\begin{equation} 
\label{eq:triangle} 
(2 AO^2 \cdot BO \cdot CO) \ x^3 + (AO^2 \cdot CO^2 + AO^2 \cdot BO^2 + BO^2 \cdot CO^2) \ x^2 - (CO^2 \cdot BO^2) = 0 
\end{equation}
Since the polynomial cannot vanish identically, we have three solutions, counting multiplicities. The discriminant of the equation is 
$$\Delta \ = \ -27 CO^4 \cdot BO^4 + (BO^2 + CO^2 (1 + BO^2))^3.$$ 
By the arithmetic mean -- geometric mean inequality, 
$$BO^2 + CO^2 (1 + BO^2) \ \geq \ 3 {(BO^2 \cdot CO^2 \cdot (CO^2 BO^2))}^\frac{1}{3}\, ,$$
showing $\Delta \geq 0$. So the solutions of the equations are all real, say $x_1, x_2, x_3$. By Vieta's formulas, we have the following system,
\begin{eqnarray*}
x_1 + x_2 + x_3 & = & -\frac{b}{a} \\
x_1 x_2 + x_1 x_3 + x_2 x_3 & = & 0 \\ 
x_1 x_2 x_3 & = & \frac{c}{a}
\end{eqnarray*}
showing that exactly two of the roots are negative, and exactly one (say $x_1$) is positive.  If $x_1 >1$, then the left hand side of Eq.~\eqref{eq:triangle} is positive, which cannot be as $x_1$ is a root.  Thus, $x_1 \leq 1$ must hold, being a valid value for $\sin \alpha$, and the only such value. 
\end{proof}

\subsection{}

The above approach can be pushed to give rise to a connection between our problem and the angle bisector problem, a geometric problem dating back to Euler, thoroughly studied in \cite{baker}:
\begin{ang}
Construct a triangle given the lengths of the angle bisectors.
\end{ang}
In the work of Baker \cite{zajic}, it is proven that this is impossible with ruler and compass, and the nature of the polynomial equations given the triangle lengths is studied. Our problem for the case of the triangle is slightly different, since the incenter is given as well.  But considering the straight skeletons of arbitrary polygons, it can be viewed as a generalization of the angle bisector problem. 

\begin{lem} 
The suitable triangle for star $S_3$ with integer edge lengths may not be constructible with ruler and compass.
\end{lem}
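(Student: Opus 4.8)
The plan is to use the cubic that appeared in the proof of Proposition~\ref{p:triangleunique} to reduce constructibility of the suitable triangle to a statement about field degrees, and then to exhibit a single star $S_3$ with integer edge lengths whose associated cubic is irreducible of degree three over $\Q$.

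First I would recall that, by the proof of Proposition~\ref{p:triangleunique}, a star $S_3$ with edge lengths $AO$, $BO$, $CO$ has a unique suitable triangle, and for it the half-angle $\alpha$ at $A$ satisfies $x=\sin\alpha\in(0,1]$ with $x$ a root of Eq.~\eqref{eq:triangle},
\[
(2\,AO^2\,BO\cdot CO)\,x^3 \ + \ (AO^2 CO^2 + AO^2 BO^2 + BO^2 CO^2)\,x^2 \ - \ CO^2 BO^2 \ = \ 0 .
\]
Next I would invoke the standard characterization of constructibility: integers are constructible, and every length produced by a straightedge-and-compass construction lies in an iterated quadratic extension of the field generated by the data. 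Hence, if the suitable triangle were constructible, then $\sin\alpha$ would be a constructible real number and $[\Q(\sin\alpha):\Q]$ would be a power of $2$. It therefore suffices to find integers $AO$, $BO$, $CO$ for which the displayed cubic is irreducible over $\Q$, since then $[\Q(\sin\alpha):\Q]=3$ and the triangle cannot be constructed.

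For the choice of lengths I would first note that the cubic acquires a rational (hence constructible) root whenever two of $AO$, $BO$, $CO$ coincide: evaluating at $x=-1$ gives $AO^2(BO-CO)^2$, which vanishes precisely when $BO=CO$, and there is a similar phenomenon when $AO=BO$. So the lengths must be taken pairwise distinct. Taking $AO=1$, $BO=2$, $CO=3$ yields $12x^3+49x^2-36=0$. To see this is irreducible over $\Q$, I would reduce modulo the prime $7$, which does not divide the leading coefficient $12$: the reduction is $5x^3+6$ over $\mathbb{F}_7$, and a zero would force $x^3\equiv 3\pmod 7$, which is impossible since the cubes modulo $7$ are only $0,1,6$. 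A cubic over a field with no root is irreducible, so by Gauss's Lemma the integer cubic is irreducible over $\Q$, and its real root $\sin\alpha$ is not constructible; therefore neither is the corresponding suitable triangle.

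I do not anticipate a genuine obstacle here, as the verification is short. The two points that deserve care are (i) selecting the integer lengths so that the cubic does not split off a rational factor, which is exactly the degenerate situation in which two of the bisector segments are equal, and (ii) making explicit the implication ``the triangle is constructible $\Rightarrow$ $\sin\alpha$ is constructible,'' i.e.\ that all metric data produced by a ruler-and-compass construction lie in a $2$-power extension of the field generated by the given lengths.
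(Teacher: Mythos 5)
Your proposal is correct and follows essentially the same route as the paper: both specialize the cubic Eq.~\eqref{eq:triangle} to edge lengths in ratio $1:2:3$, obtain $12x^3+49x^2-36=0$, and conclude that $\sin\alpha$ (hence the triangle) is not constructible because an irreducible cubic over $\Q$ has no constructible roots. The only difference is cosmetic: you certify irreducibility by reduction modulo $7$, while the paper rules out rational roots by exhaustion via the rational root theorem.
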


\begin{proof}
Let $O$ be the center of the star, and $A$, $B$, $C$ be the leaves, again as in Figure~\ref{f:tri-alg}. Consider the star edge lengths where $AO = r$, $BO = 2r$, $CO = 3r$, for $r \in \mathbb{N}$. Then Eq.~\eqref{eq:triangle} becomes 
$$12 r^3 + 49 r^2 - 36 = 0.$$ 
Any rational root $a/b$ of this equation is such that $a$ divides 36 and $b$ divides $12$; by exhaustion, no such combination succeeds.  But if a cubic equation with rational coefficients has no rational root, then none of its roots are constructible. So for these particular values, $\sin \alpha$ is not constructible.  
Thus, if the triangle was constructible, then clearly $\alpha$, as well as $\sin \alpha$, are constructible as numbers, a contradiction.
\end{proof}

\subsection{}

Consider the problem of constructing the suitable polygon for a feasible phylogenetic tree.  We close with discussing unsolvability issues of this problem when constrained to the algebraic model: only the arithmetic operations $+, -, \times, /$ along with $\sqrt[k]{}$ of rational numbers are allowed.

\begin{lem} 
Let $G \in \G$ be a phylogenetic tree with rational edge lengths.  In general, the convex suitable polygon $P$ for $G$ has side lengths not expressible by radicals over $\mathbb{Q}$. 
\end{lem}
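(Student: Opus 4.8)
The plan is to reduce the statement to a Galois-theoretic non-solvability, continuing the circle of ideas around Eq.~\eqref{eq:triangle} and Proposition~\ref{p:triangleunique}. Since ``in general'' only asks for the failure to occur, it suffices to exhibit a single tree $G\in\G$ with rational edge lengths whose convex suitable polygon has a side length that is not expressible by radicals over $\Q$; Hilbert's irreducibility theorem then promotes this to a Zariski-dense set of rational edge-length vectors. I would look for the example among stars.

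\textbf{Setting up a polynomial.} Let $S_n$ have center $O$, leaves $v_1,\dots,v_n$, rational edge lengths $l_i=l(e_i)$, and let $P$ be its suitable convex polygon (unique once a chronological center is fixed, by Lemma~\ref{l:maxone}); write $2\alpha_i$ for the interior angle of $P$ at $v_i$, put $x:=\sin\alpha_1$, and let $s_j$ be the side of $P$ joining $v_j$ to $v_{j+1}$. By Eq.~\eqref{e:length-vel} one has $\sin\alpha_i=(l_1/l_i)\,x$ and $\cos\alpha_i=\sqrt{1-(l_1/l_i)^2x^2}$, and a short computation (drop the perpendicular from $O=\cc(P)$ to the side $v_jv_{j+1}$) gives the clean relation $s_j=l_j\cos\alpha_j+l_{j+1}\cos\alpha_{j+1}$. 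In particular $\cos\alpha_1=(s_1^2+l_1^2-l_2^2)/(2s_1l_1)$, so $x^2=1-[(s_1^2+l_1^2-l_2^2)/(2s_1l_1)]^2\in\Q(s_1)$; hence $x$ is expressible by radicals over $\Q$ \emph{if and only if} $s_1$ is, and it is enough to prove that for a suitable choice of the $l_i$ the number $x$ is not so expressible.

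\textbf{The Galois argument.} The value $x$ is a root of the polynomial $f(x)\in\Q[x]$ obtained by rewriting the polygon angle relation $\sum_{i=1}^n\arcsin\bigl((l_1/l_i)x\bigr)=(n-2)\pi/2$ of Eq.~\eqref{e:arcsin} through repeated use of the sine addition formula and then clearing the radicals $\cos\alpha_i$ by successive squarings, exactly as in the proof of Proposition~\ref{p:star-rigid}. For $n$ sufficiently large the irreducible factor $f_0$ of $f$ through $x$ has degree at least $5$. One then fixes explicit rational values $l_1,\dots,l_n$ and checks, by reducing $f_0$ modulo a few small primes and reading off the cycle types of the Frobenius elements via Dedekind's theorem (for instance producing a $p$-cycle and a transposition for a prime-degree $f_0$, which forces $\mathrm{Gal}(f_0/\Q)$ to be the full symmetric group), that $\mathrm{Gal}(f_0/\Q)$ is not solvable. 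By the Abel--Ruffini theorem $x$ is then not expressible by radicals over $\Q$, hence neither is $s_1$, which proves the lemma.

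\textbf{Main obstacle.} The delicate points are computational rather than conceptual: one must verify that, after clearing transcendentals, the degree of $f_0$ actually reaches $5$ and does not collapse for the chosen $n$ and parameters, and one must confirm that the \emph{geometrically realized} root---the $\sin\alpha_1$ belonging to the genuine convex polygon $P$---lies on the non-solvable irreducible factor and not on some spurious, possibly solvable, factor introduced by the squarings. Both are settled by a single explicit computation with a concrete rational edge-length vector, after which the argument above applies verbatim.
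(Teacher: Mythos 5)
Your proposal follows essentially the same route as the paper: the authors take a five-leaf star with rational edge lengths $r,r,r,10r/11,10r/12$, convert the angle-sum relation into an explicit degree-six polynomial for $x=\sin\alpha$, verify via reductions modulo $7$, $13$, and $17$ that its Galois group is the non-solvable symmetric group ${\mathbb S}_6$, and then transfer non-expressibility from $\sin\alpha$ to the side lengths of $P$, exactly as you outline. The only difference is that you defer the explicit numerical witness and the verification that the geometric root lies on the non-solvable factor, which is precisely the computation the paper carries out in full.
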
 

\begin{proof}
Consider a star $S_3$ with edge lengths $r, r, r, 10r/11, 10r/12$, for $r \in \mathbb{N}$.   For a suitable convex polygon $P$, the sum of the interior angles yields
\begin{equation}
\label{e:5star}
3\cdot \arcsin(x) + \arcsin\left(\frac{11x}{10}\right) + \arcsin\left(\frac{12x}{10}\right) \ = \ \frac{3 \pi}{2}\, ,
\end{equation}
where $x^{-1}$ equals the velocity $\vel$ at the leaf with edge length $r$. 
It is not hard to see that if Eq.~\eqref{e:5star} has a solution, there exists a suitable polygon $P$ for $G$.

Now we show that the edge lengths of $P$ are not expressible as radicals over $\mathbb{Q}$.  By rewriting $\arcsin x$ in its logarithmic form as $-i \ln(i x + \sqrt{1-x^2})$, and substituting into Eq.~\eqref{e:5star}, $x$ becomes the square root of one of the zeros of the polynomial
\begin{eqnarray*}
p(x) & = & 1 - 2330 x + 1837225 x^2 - 653926400 x^3 + 111607040000 x^4 \\
&& - 8795136000000 x^5 + 256000000000000 x^6.
\end{eqnarray*}
The discriminant of polynomial $p$ is
$$\Delta(p) = 2^{74} \cdot 3^6 \cdot 5^{30} \cdot 11^6 \cdot 23^2 \cdot 29 \cdot 31 \cdot 79^2 \cdot 43151 \cdot 2626069.$$
Since $p$ is irreducible modulo 13, and since 13 does not divide 1 (the constant term), $p$ is irreducible.  Considering $p$ modulo  $7$, $13$, and $17$ (the ``good'' primes), we get a (2+3)-permutation, a 5-cycle and a 6-cycle as the Galois groups, respectively, showing the Galois group of $p$ is the symmetric group ${\mathbb S}_6$ on six letters \cite[Lemma 8]{bajaj}.  Since ${\mathbb S}_6$ is not solvable, then $x = \frac{1}{\vel} = \sin \alpha$ is not expressible using  arithmetic operations and $\sqrt[k]{}$ of rational numbers, where $\alpha$ is the angle subtended at the vertex with velocity $\vel$. But since $\sin \alpha$ can be expressed as an expression involving arithmetic operations on radicals of the side lengths of polygon $P$, and the edge lengths of  tree $G$, the claim holds. 
\end{proof}

\begin{rem} 
Of course, the fact that the side lengths are not expressible via radicals over $\mathbb{Q}$ implies that the coordinates of the vertices of the polygon must also not be expressible via radicals over $\mathbb{Q}$. 
\end{rem}

\begin{open}
Construct an algorithm approximately calculating a polygon from a given phylogenetic tree.
\end{open}

\begin{open}
For a given phylogenetic tree, provide a rigidity result for arbitrary suitable polygons, not just convex ones.
\end{open}

%
%

\bibliographystyle{amsplain}

\begin{thebibliography}{XXX}
\baselineskip=15pt

\bibitem{ah} O.\ Aichholzer, D.\ Alberts, F.\ Aurenhammer, B.\ G\"artner. A novel type of straight skeleton, \emph{Journal of Universal Computer Science} {\bf 1} (1995) 752--761.

\bibitem{euro} O.\ Aichholzer, H.\ Cheng, S.\ Devadoss, T.\ Hackl, S.\ Huber, B.\ Li, A.\ Risteski. What makes a tree a straight skeleton, \emph{EuroCG} 2012.

\bibitem{bajaj} C.\ Bajaj. The algebraic degree of geometric optimization problems, \emph{Discrete and computational geometry} {\bf 3} (1988) 177--191.

\bibitem{baker} R. P.\ Baker. The problem of the angle-bisectors, PhD thesis, University of Chicago, 1911.

\bibitem{bhv} L.\ Billera, S.\ Holmes, K.\ Vogtmann. Geometry of the space of phylogenetic trees, \emph{Advances in Applied Mathematics} {\bf 27} (2001) 733--767.

\bibitem{boa} J.\ Boardman. Homotopy structures and the language of trees, \emph{Proceeding of Symposia in Pure Mathematics} {\bf 21} (1971) 37--58.

\bibitem{do} S.\ Devadoss and J.\ O'Rourke. \emph{Discrete and Computational Geometry}, Princeton University Press, 2011.

\bibitem{dm} S.\ Devadoss and J.\ Morava.  Diagonalizing the genome I, preprint 2010 ({\tt arXiv:1009.3224}).

\bibitem{hu} Y.\ Hu. Moduli spaces of stable polygons and symplectic structures on \CM{n}, \emph{Compositio Mathematica} {\bf 118} (1999) 159--187.

\bibitem{km} M.\ Kapovich and J.\ Millson.  The symplectic geometry of polygons in Euclidean space, \emph{Journal of Differential Geometry} {\bf 44} (1996) 479--513.

\bibitem{lp} D.\ Levy and L.\ Pachter.  The neighbor-net algorithm, \emph{Advances in Applied Mathematics} {\bf 47} (2011) 240-Ð258.

\bibitem{pkwa} R.\ Penner, M.\ Knudsen, C.\ Wiuf, J.\ Andersen.  Fatgraph models of proteins, \emph{Communications on Pure and Applied Mathematics} {\bf 63} (2010) 1249--1297.

\bibitem{rw} A.\ Robinson and S.\ Whitehouse. The tree representation of ${\mathbb S}_{n+1}$, \emph{Journal of Pure and Applied  Algebra} {\bf 111} (1996) 245--253.

\bibitem{zajic} V.\ Zajic. Triangle from angle bisectors, {\tt http://www.cut-the-knot.org/triangle} (2003).

\end{thebibliography}

\end{document}